\documentclass[journal]{vgtc}

\onlineid{1613}

\vgtccategory{Research}

\title{EXaCTz: Guaranteed Extremum Graph and Contour Tree Preservation for Distributed- and GPU-Parallel Lossy Compression}

\author{
        \authororcid{Yuxiao Li}{0000-0002-8715-5982},
        \authororcid{Mingze Xia}{0009-0000-7653-9000},
        \authororcid{Xin Liang}{0000-0002-0630-1600},
        \authororcid{Bei Wang}{0000-0002-9240-0700},
        and \authororcid{Hanqi Guo}{0000-0001-7776-1834}
}
\authorfooter{
    \item Yuxiao Li and Hanqi Guo are with The Ohio State University, Columbus, OH, USA
    (e-mail: \{li.14025, guo.2154\}@osu.edu).
    \item Mingze Xia and Xin Liang are with Oregon State University, Corvallis, OR, USA
    (e-mail: \{xiami, lianxin\}@oregonstate.edu).
    \item Bei Wang is with The University of Utah, Salt Lake City, UT, USA
    (e-mail: beiwang@sci.utah.edu).
}

\abstract{This paper introduces EXaCTz, a parallel algorithm that concurrently preserves extremum graphs and contour trees in lossy-compressed scalar field data. While error-bounded lossy compression is essential for large-scale scientific simulations and workflows, existing topology-preserving methods suffer from (1) a significant throughput disparity, where topology correction speeds are on the order of MB/s, lagging orders of magnitude behind compression speeds on the order of GB/s, (2) limited support for diverse topological descriptors, and (3) a lack of theoretical convergence bounds. To address these challenges, EXaCTz introduces a high-performance, bounded-iteration algorithm that enforces topological consistency by deriving targeted edits for decompressed data. Unlike prior methods that rely on explicit topology reconstruction, EXaCTz enforces consistent min/max neighbors of all vertices, along with global ordering among critical points. As such, the algorithm enforces consistent critical-point classification, saddle extremum connectivity, and the preservation of merge/split events. We theoretically prove the convergence of our algorithm, bounded by the longest path in a \textit{vulnerability graph} that characterizes potential cascading effects during correction. Experiments on real-world datasets show that EXaCTz achieves a single-GPU throughput of up to 4.52 GB/s, outperforming the state-of-the-art contour-tree-preserving method (Gorski et al.~\cite{gorski2025general}) by up to 213$\times$ (with a single-core CPU implementation for fair comparison) and 3,285$\times$ (with a single-GPU version). In distributed environments, EXaCTz scales to 128 GPUs with 55.6\% efficiency (compared with 6.4\% for a naive parallelization), processing datasets of up to 512 GB in under 48 seconds and achieving an aggregate correction throughput of up to 32.69 GB/s.
}

\keywords{High-performance computing, parallel lossy compression, topology data analysis, contour trees, extremum graphs}

\renewcommand{\manuscriptnotetxt}{}
\graphicspath{{figures/}} 
\usepackage{amsmath, amsthm}
\theoremstyle{plain} 
\newtheorem{theorem}{Theorem}[section]

\usepackage{algorithm}
\usepackage{algpseudocode}
\theoremstyle{definition}

\theoremstyle{remark} 

\usepackage{booktabs}
\usepackage{graphicx}
\usepackage{amssymb}
\usepackage{multirow}
\usepackage{tikz}
\usepackage{mdframed}
\usepackage{wrapfig}

\newcommand{\tool}{EXaCTz}

\usetikzlibrary{shapes.geometric, arrows.meta, positioning, calc, decorations.pathmorphing, decorations.markings}

\usepackage{mathptmx}

\begin{document}
\maketitle

\section{Introduction}

With the rapid growth of data from large-scale scientific simulations (e.g., reaching the order of terabytes in cosmology, combustion, and climate modeling~\cite{Cappello19,SDRBench}), researchers face increasing challenges in storage, transmission, and downstream analysis~\cite{di2024survey}. To reduce data volumes, error-bounded lossy compressors (e.g., SZ~\cite{sz, sz3, Tao_2017, Kai2021, Kai2020} and ZFP~\cite{zfp}) have become widely adopted due to the ability to achieve high compression ratios while providing strict pointwise error guarantees. To keep up with increasing data generation rates, recent studies leverage GPU and distributed parallelism to improve compression throughput for performance-critical scientific applications. 
For example, cuSZp~\cite{huang2023cuszp, huang2024cuszp2} achieves tens of GB/s throughput, enabling in situ compression at scale.

More recently, topology-preserving compression methods have been developed to address distortions in topological features in lossy-compressed data~\cite{li_msz, toposz, gorski2025general, li2026pmsz, Xia25, Xia24, li2025, liange_2020, LiangDCRLOCPG23}, because even with strong, pointwise quality guarantees, error-bounded lossy compressors do not guarantee the correctness of features required for downstream analyses in applications such as chemistry, combustion, and cosmology~\cite{Bhatia_2018, Günther_2014, Dora_2013, Chen2009, Shivashankar_2016}. For example, extremum graphs~\cite{extremumgraph} and contour trees~\cite{CARR200375} are important topological descriptors of scalar fields. An extremum graph captures the connectivity of critical points, while a contour tree represents the hierarchical evolution of level sets (isocontours) across scalar values~\cite{CARR200375}. In cosmology simulations, extremum graphs extract the spatial connectivity of dark matter halos, while contour trees capture the hierarchical structure of the cosmos~\cite{Shivashankar_2016}. Distortions caused by compression errors, as shown in Figure~\ref{fig:distortions}, can lead scientists to falsely identify dark matter structures or miscalculate the accretion history, affecting downstream scientific conclusions. 

\begin{figure}[htb!]
    \centering
    \includegraphics[width=\linewidth]{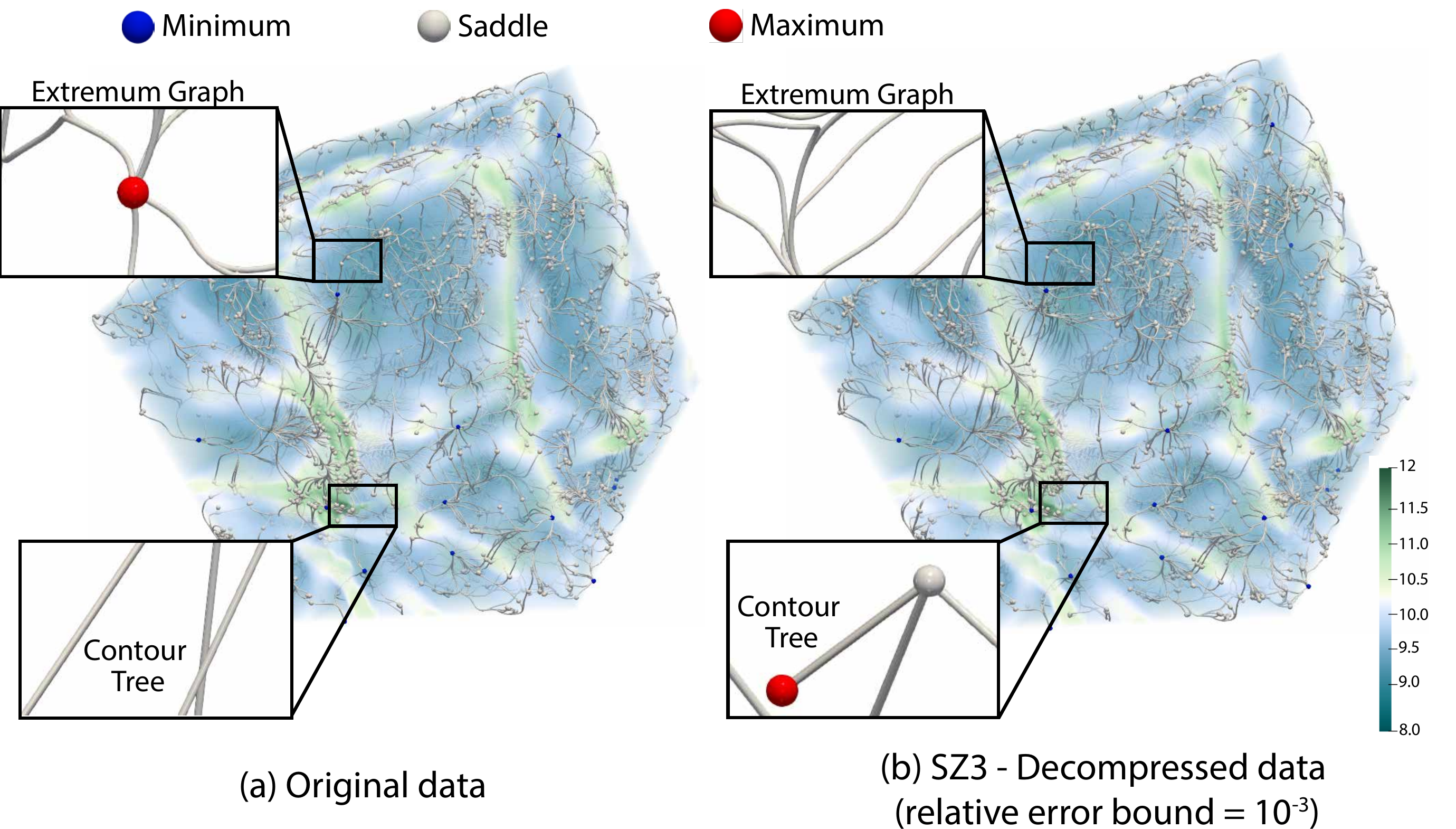}
    \caption{Distortions introduced by error-bounded lossy compression on the cosmology dataset. (a) Original data. (b) SZ3-decompressed data. }
    \label{fig:distortions}
    \vspace{-0.2in}
\end{figure}

While effective in preserving topology, existing topology-preserving compression methods have three major limitations: (1) a significant performance gap compared to modern high-throughput compressors, with topology correction throughput on the order of MB/s versus GB/s for state-of-the-art compressors;  (2) limited support for diverse topological descriptors, leaving important structures such as extremum graphs unaddressed; and (3) no explicit upper bound on the number of correction iterations, making worst-case performance unpredictable.

First, the performance gap remains a critical bottleneck for topology-preserving compression.
While modern compressors achieve throughput at the GB/s scale, existing methods typically operate at MB/s. This disparity is especially severe for contour-tree-preserving methods, which rely on explicit contour tree construction. For example, recent work by Gorski et al.~\cite{gorski2025general} improves the efficiency of contour-tree-preserving compression by avoiding full contour tree reconstruction in TopoSZ~\cite{toposz}, but still requires repeated partial tree computations. As a result, preserving the simplified contour tree of a $512^3$ cosmology snapshot takes over $16$ minutes, yielding throughput at the MB/s scale, far below modern compressors that achieve tens of GB/s. Contour tree computation and topological simplification pose significant performance and scalability challenges~\cite{Carr2022Distributed, Mingzhe25} due to their reliance on global data dependencies. As demonstrated by our experiments, the simplification phase introduces substantial overhead (e.g., 3.21 s for SZ3 compression vs. 72.52 s for simplification on a $512^3$ cosmology snapshot). Methods that preserve simplified topology and rely on explicit topology construction inherit these scalability limitations.

Second, existing methods provide limited support for preserving extremum graphs. 
Most prior work focuses on a single topological descriptor, such as contour trees~\cite{toposz, gorski2025general} or Morse-Smale segmentations~\cite{li_msz, li2026pmsz}, without explicitly preserving extremum graphs. However, extremum graphs encode the connectivity among critical points, which is essential for capturing structural relationships in scalar fields. Distortions in extremum graphs could lead to incomplete or inconsistent topological representations, as illustrated earlier.

Third, existing iterative topology correction methods lack explicit theoretical guarantees on the number of iterations. While prior approaches~\cite{li_msz, li2026pmsz, li2025} establish convergence under a finite number of iterations, they do not provide explicit bounds on the required number of iterations, making their worst-case performance unpredictable.

To address these challenges, we introduce EXaCTz, a distributed- and GPU-parallel algorithm that concurrently preserves extremum graphs and contour trees under error-bounded lossy compression, without explicit topology construction. 
Our approach addresses the three limitations through the following key ideas.

First, to bridge the performance gap, we avoid explicit topology construction and instead enforce topology-preserving constraints on scalar values, enabling efficient GPU acceleration and distributed execution. 

Second, to enable concurrent preservation of extremum graphs and contour trees, we use the theoretical connection between them established by ExTreeM~\cite{ExtreeM}, which derives contour trees from extremum graphs. 
Specifically, we enforce (1) extremum graph preservation (EG constraints), (2) scalar ordering among saddles preservation (saddle constraints), and (3) merge and split events associated with saddles (event constraints). 
We use an \textit{edit-based iterative correction} strategy~\cite{li_msz} to enforce these constraints by iteratively detecting violations and applying bounded, monotonic edits to a subset of data values within the prescribed error bound.

Third, to provide theoretical guarantees, we analyze the worst-case behavior of the iterative correction process by modeling cascading effects. We characterize the cascading effects using a \textit{vulnerability graph} that explicitly captures dependencies among corrections. Based on this formulation, we derive a explicit upper bound on the number of correction iterations via the longest path in the graph, ensuring predictable convergence and computational cost.

To efficiently handle datasets that exceed single-node memory capacity or are distributed across multiple nodes, we introduce an alternative formulation for event constraints to improve scalability in distributed-memory environments. In the serial version formulation, event constraints require identifying the extrema connected to each saddle through integral paths, i.e., tracing steepest ascent and descent trajectories in the scalar field. Global integral path tracing is difficult to scale in distributed-memory environments due to cross-partition dependencies~\cite{li_msz, Will2024, li2026pmsz}. To eliminate this communication bottleneck, we replace explicit integral path tracing with a strict enforcement of the global scalar ordering among all critical points, which is sufficient to ensure correct topological events without requiring path computation. Our contributions are summarized as follows:
\begin{itemize}
    \item We propose a method that preserves \textbf{extremum graphs} and \textbf{contour trees} under error-bounded lossy compression by enforcing a set of constraints, without explicit topology construction.
    \item We introduce EXaCTz, a \textbf{distributed-} and \textbf{GPU-parallel} algorithm that preserves both extremum graphs and contour trees under error-bounded lossy compression, achieving GB/s-scale throughput on GPUs and substantially reducing runtime compared to state-of-the-art methods (e.g., up to $213\times$ speedup compared to prior CPU-based methods in serial experiments).
    \item We formally establish the convergence of the edit-based correction strategy and derive a worst-case \textbf{upper bound} on the number of correction iterations.
    \item We improve the scalability of our method in distributed-memory environments by avoiding computationally expensive integral path tracing, increasing parallel efficiency from $6.4\%$ to $55.6\%$, and enabling GB/s-scale throughput (up to 32.69 GB/s) while processing $512$~GB datasets in under 48 seconds.
    
\end{itemize}

\section{Related Work}

We review related work on error-bounded lossy compression for scientific data, topology-preserving lossy compression and merge (and contour) tree computation.

\subsection{Error-Bounded Lossy Compression for Scientific Data}
Error-bounded lossy compression has been widely used in large-scale scientific computing as it significantly reduces data size while guaranteeing that the pointwise error does not exceed a user-defined bound. However, most existing compressors only focus on preserving pointwise numerical accuracy, without ensuring the preservation of features of interest that are critical for scientific analysis. For a comprehensive survey, we refer readers to Di et al.~\cite{di2024survey}.

Error-bounded lossy compressors can be categorized based on the optimization objectives in terms of compression ratio and throughput. Compressors optimized for high compression ratio aim to improve data reduction efficiency. 
Representative methods include the SZ series~\cite{sz, sz3, Tao_2017, Kai2021}, variants such as QoZ~\cite{liu2022qoz}, and neural-network-enhanced approaches such as AE-SZ~\cite{liu_2021} and SRNN-SZ~\cite{liu2023srnsz}, which use advanced predictors to estimate data values more accurately. 
GPU-based extensions such as cuSZ-i and cuSZ-Hi~\cite{liu2024cuszi, liu2025cuszhi} further improve compression ratio through multi-level interpolation and integrated lossless encoding. 
Compressors such as ZFP~\cite{zfp}, TTHRESH~\cite{TTHRESH}, SPERR~\cite{SPERR}, and MGARD~\cite{MGARD} also achieve high compression ratios for scientific data by leveraging transform-based representations.
Compressors optimized for high throughput focus on achieving fast end-to-end performance. 
For example, cuSZp~\cite{huang2023cuszp, huang2024cuszp2} achieves high-throughput compression by fusing multiple kernels and minimizing memory overhead, targeting performance-critical scientific workflows.

\subsection{Topology-Preserving Lossy Compression}
Topology-preserving compression has been explored for a variety of topological structures in scalar-field data, such as Morse-Smale complexes~\cite{li2025} (and their induced segmentations~\cite{li_msz}) and merge/contour trees~\cite{topoqz, gorski2025general, toposz}. However, existing methods are designed to preserve a single topological descriptor rather than multiple simultaneously, and simply storing topological structures as metadata is insufficient, since downstream analyses also rely on the underlying scalar field values, making consistency between data and topology essential.

\noindent\textbf{Contour Tree Preservation} has been studied in the context of error-bounded lossy compression, where most existing methods rely on explicit contour tree construction during compression, introducing significant overhead and limiting scalability on large datasets.
Soler et al.~\cite{topoqz} provided one of the earliest topology-aware compression methods, preserving critical point pairs above a persistence threshold~$\epsilon$ through simplification and quantization, but not designed to preserve contour trees.
Lin et al. proposed TopoSZ~\cite{toposz} by modifying SZ1.4~\cite{sz} to enforce upper and lower bounds derived from contour-tree-induced segmentation, ensuring that the decompressed data retains the same contour tree as the original. 
Gorski et al.~\cite{gorski2025general} further generalize localized error bounds by augmenting a wide range of compressors (including SZ, ZFP, and TTHRESH), and introduce a progressive bound tightening strategy that constructs partial contour trees in each iteration.

\noindent\textbf{Morse--Smale Segmentations/Complexes Preservation} has also been studied in topology-preserving compression, but does not directly address the preservation of contour trees. Li et al.~\cite{li_msz, li2026pmsz} proposed an edit-based strategy to preserve Morse--Smale segmentations by modifying the scalar values at a subset of data points in the decompressed data. The edit-based strategy was later extended to preserve the full Morse--Smale complexes~\cite{li2025} by designing a different editing scheme. 

As related, topology-preserving compression has been explored for vector fields. 
Early work focused on modifying data or meshes to maintain topological consistency, such as collapsing mesh edges in 2D vector fields~\cite{Tricoche_01} and simplifying vector field topology prior to compression~\cite{Theisel_03}. 
Recent approaches integrate topological constraints directly into the compression process. For example, Liang et al.~\cite{liange_2020, LiangDCRLOCPG23} and Xia et al.~\cite{Xia24, Xia25} proposed compressor-specific methods that preserve critical points and topological skeletons during compression.

\subsection{Merge (and Contour) Tree Computation}
Merge and contour trees capture the connectivity of (sub)level sets in scalar fields and are widely used in topological data analysis. However, the computation of merge trees is known to be expensive, especially for large-scale scientific data, becoming a potential bottleneck when incorporated into compression pipelines. We review merge tree computation algorithms, as contour tree construction relies on merge trees.

Most existing algorithms construct merge trees by processing the full scalar field and can be implemented as \textbf{serial methods}, such as union-find-based approaches~\cite{CARR200375, Kruskal56, CHIANG2005165} and integral-line-based methods~\cite{Maadasamy12, Ande_2023} which rely on global sorting or long integral paths tracing. Such global operations are limited by irregular memory accesses and severe data dependencies, making serial execution computationally expensive for large scale datasets. To improve computational efficiency, several \textbf{parallel methods} have been proposed. Shared-memory sweep-based algorithms, such as FTM-Tree~\cite{Gueunet17}, propagate superlevel-set components using priority queues. Data-parallel pruning approaches, including Parallel Peak Pruning (PPP)~\cite{Carr21}, leverage data-parallel primitives to iteratively remove branches. However, irregular data accesses, global updates, and dynamic connectivity changes limit the scalability of such techniques on GPUs. For datasets beyond the capacity of a single node, distributed approaches partition data across compute nodes. Carr et al.~\cite{Carr2022Distributed} introduce a distributed hierarchical representation, while Li et al.~\cite{Mingzhe25} use distributed hypersweeps and branch decompositions. However, localized tree computations and subsequent global boundary merging still introduce heavy communication overhead. 

Alternatively, \textbf{extremum-graph-based} methods~\cite{ExtreeM} construct merge trees directly from extremum graphs and have been proven to produce the same merge trees as those computed from the full scalar field. We review it in the next section.

\section{Background}
We review the definition of extremum graphs and merge trees, the computation of merge trees via extremum graphs~\cite{ExtreeM}, and the widely-used edit-based strategy for topology correction~\cite{li_msz}. Throughout this paper, the underlying data is modeled as a piecewise-linear (PL) scalar field $f: \mathbb{X} \rightarrow \mathbb{R}$ defined on a simply connected domain $\mathbb{X}$, where scalar values are assigned to the vertices.

\subsection{Extremum Graphs and Merge Trees}\label{subsec:contour_trees}
We formally review integral paths, extremum graphs, and merge trees.

\noindent\textbf{Integral Paths} in PL scalar field are discretized gradient flows represented by monotonic edge paths on the mesh, computed by iteratively tracing from a vertex to its highest (or lowest) adjacent neighbor until reaching a local maximum (or minimum). Integral paths are fundamental in computing extremum graphs, where descending  and ascending paths are illustrated by the blue and red arrows in Fig.~\ref{fig:EG}(a) and (b).

\noindent\textbf{Extremum Graphs}~\cite{extremumgraph} consist of nodes representing critical points (extrema and saddles) and edges representing integral paths (i.e., steepest ascent or descent paths) connecting critical points. Extremum graphs capture connectivity among extrema in the PL scalar field $f$. 
As shown in Figure~\ref{fig:EG}(a), the extremum graph for minima traces integral paths connecting local minima (blue nodes) and saddles (white nodes). The extremum graph for maxima (Figure~\ref{fig:EG}(b)) captures paths between local maxima (red nodes) and saddles.

\begin{figure}[htb!]
    \centering
    \includegraphics[width=\linewidth]{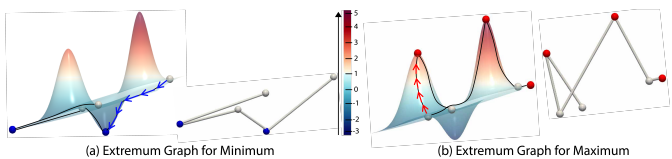}
    \caption{Illustration of integral paths and extremum graphs. (a) Extremum graph for minima captures the connectivity between minima (blue nodes) and saddles (white nodes). (b) Extremum graph for maxima captures the connectivity between maxima (red nodes) and saddles.}
    \label{fig:EG}
    \vspace{-0.2in}
\end{figure}

\begin{figure}[htb!]
    \centering
    \includegraphics[width=\linewidth]{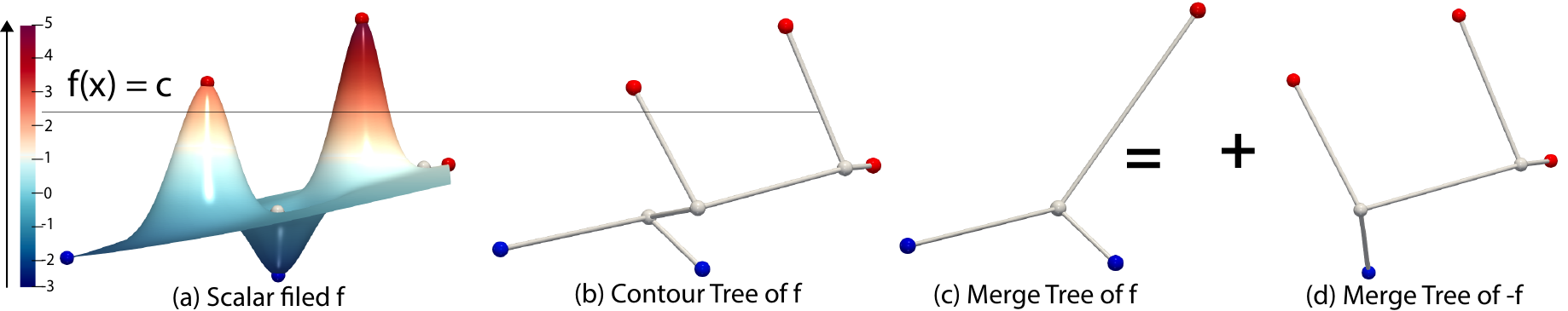}
    \caption{Illustration on merge/contour trees. 
    (a) Scalar field $f$ with a level set $L(c)$. (b) Contour tree. (c) Merge tree of $f$. (d) Merge tree of $-f$. }
    \label{fig:mt}
    \vspace{-0.1in}
\end{figure}

\noindent\textbf{Merge Trees} characterize the evolution of data connectivity as a scalar threshold $c$ varies over the data range, based on sublevel sets $L^{-}(c) = \{x \mid f(x) \le c\}$ and superlevel sets $L^{+}(c) = \{x \mid f(x) \ge c\}$. 
Under increasing thresholds, connected components in the sublevel sets are created, expanded, and merged at critical points (join saddles) in Figure~\ref{fig:mt}(c). 
Under decreasing thresholds, analogous behavior occurs for superlevel sets, as shown in the merge tree of $-f$ in Figure~\ref{fig:mt}(d).

\noindent\textbf{Contour Trees} track the connectivity of \textit{level sets} (or isocontours), defined as $L(c) = \{x \mid f(x) = c\}$ (illustrated as a horizontal slice in Figure~\ref{fig:mt}(a)), and provide an abstraction of how these level sets merge and split. In a contour tree (Figure~\ref{fig:mt}(b)), the leaf nodes correspond to extrema, while the non-leaf nodes correspond to saddles. Most contour tree algorithms first compute the merge trees of $f$ and $-f$ (also referred to as the join tree and split tree, respectively, in the literature), followed by a combination step to form the contour tree~\cite{CARR200375}.

\subsection{Merge Tree Computation via Extremum Graphs}
The theory of our contour-tree-correction algorithm is grounded by ExTreeM~\cite{ExtreeM}, an extremum-graph-based method that computes a merge tree equivalent to the one derived from the full scalar field.
The key idea of ExTreeM is to first extract all critical points in the scalar field and construct the extremum graph,
and then to extract valid extremum-saddle pairs from the extremum graph to form the merge tree.
For brevity, we describe the construction of the join tree,  as the procedure for split tree follows a completely symmetric rule. 

\noindent\textbf{Step 1. Construct Extremum Graphs}. 
ExTreeM first identifies critical points of the scalar field using local neighborhood comparisons. 
A vertex is classified as a minimum (maximum) if the value is smaller (larger) than all neighboring vertices, while saddle points correspond to vertices whose neighborhoods partition into multiple monotonic regions (Figure~\ref{fig:saddle_class}). 
Saddles are further categorized as join or split saddles based on whether sublevel-set components merge or superlevel-set components split at saddle.
Based on the critical points, ExTreeM constructs an extremum graph that contains extrema and saddles. For a join tree, the extremum graph consists of minima, join saddles, and edges connecting them (Figure~\ref{fig:EG}(a)). An edge is added from a join saddle $i$ to a minimum $j$ if there exists a neighboring vertex $k$ of $i$ such that $f_k < f_i$, and the steepest descending path from $k$ terminates at $j$. 
\begin{figure}[htb]
    \centering
    \includegraphics[width=\linewidth]{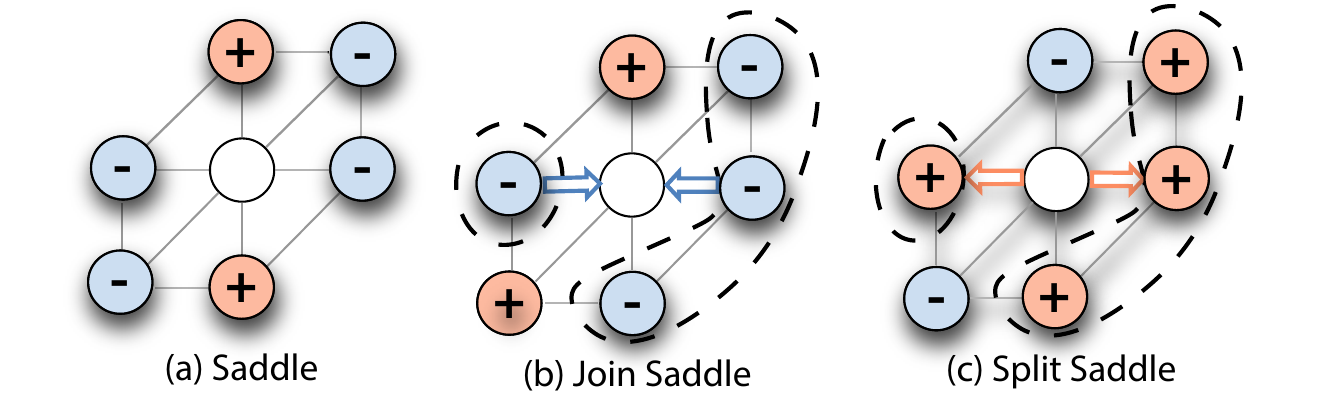}
     \caption{Saddle classifications based on local connectivity: (a) standard, (b) join, and (c) split. $+$/$-$ indicate neighbors with higher/lower scalar values than the center point.}
    \label{fig:saddle_class}
    \vspace{-0.2in}
\end{figure}
\begin{figure}[htb]
    \centering
    \includegraphics[width=\linewidth]{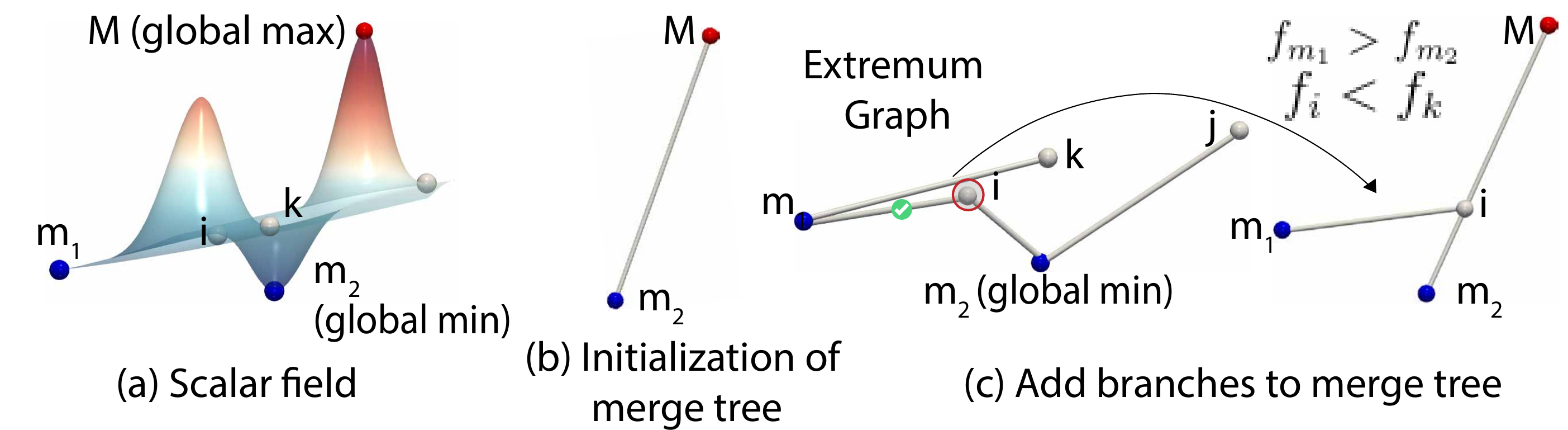}
    \caption{Merge tree construction from an extremum graph: (a) a scalar field $f$ with critical points (blue minima, white saddles); (b) initialization of the merge tree; (c) branch extraction via Extremum Graph Pairing (EGP).}
    \label{fig:jointree}
    \vspace{-0.1in}
\end{figure}

\noindent \textbf{Step 2. Compute Merge Trees from Extremum Graphs.} 
After extremum graph construction, ExTreeM applies \textit{Extremum Graph Pairing (EGP)} to iteratively extract tree branches (minimum--join-saddle pairs). 
As shown in Figure~\ref{fig:jointree}, the merge tree is initialized by connecting the global minimum ($m_2$) to the global maximum (Figure~\ref{fig:jointree}(b)). 
For an unpaired local minimum (e.g., $m_1$ in Figure~\ref{fig:jointree}(c)), EGP identifies the lowest connected saddle on the extremum graph (saddle $i$, since $f_i < f_k$). 
A branch is extracted when the minimum has the highest scalar value among all minima connected to that saddle. 
In Figure~\ref{fig:jointree}, saddle $i$ connects $m_1$ and $m_2$. Since $f_{m_1} > f_{m_2}$, the branch connecting $m_1$ and $i$ is extracted and added to the merge tree. After pairing, $m_1$ is removed and replaced by the lower minimum ($m_2$). The connectivity of the remaining saddles ($i, k, j$) is then updated. Once these saddles connect to a single minimum, distinct components no longer separate, so the saddles are removed. This process repeats until all minima are paired except for the global minimum.

\noindent\textbf{Merge Tree Equivalence~\cite{ExtreeM}.} 
Let $G$ be the extremum graph of scalar field $f$. ExTreeM establishes that for any scalar threshold $c$, there exists a strict bijection between the connected components of the sublevel set $L^-(c) = \{x \mid f(x) \le c\}$ and the components of the subgraph of $G$ induced by vertices with $f(v) \le c$. Two minima belong to the same sublevel-set component if and only if they belong to the same component in $G$. Since merge trees are uniquely determined by the exact sequence of component creations (at minima) and merges (at saddles), this bijection guarantees that the merge tree extracted from $G$ is completely identical to that of the full field $f$. Symmetrical guarantees apply to the split tree. or full details, we refer the reader to~\cite{ExtreeM}.

\subsection{(Monotonic) Edit Strategy for Topology Correction}
\label{sec:bg_monotonic_edit}
To preserve topology in lossy compression, a series of recent studies~\cite{li_msz, li2026pmsz, li2025} focus on using a \textit{monotonic edit strategy} because topological distortions primarily stem from the disruption of local scalar orderings (e.g., an original ordering $f_v > f_u$ flipping to $\hat{f}_v < \hat{f}_u$\footnote{For equal scalar values at adjacent vertices (e.g., plateaus), we use Simulation of Simplicity (SoS)~\cite{Edels94} to break ties in a robust manner by treating the vertex with the larger global index as having a larger value.}). The core idea is to iteratively apply monotonic edits to vertices that cause inconsistency in topology. We use the discretized monotonic edit approach from DMTz~\cite{li2025} (a method designed to preserve Morse-Smale complexes). Specifically, the global absolute error bound $\xi$ is uniformly divided into $N$ discrete steps. In each iteration, the scalar value of the vertex causing a topological distortion is decreased by a fixed step size of $\Delta = \xi / N$. If a vertex $i$ requires more than $N$ edits, we store the edit in a lossless manner as the absolute lower bound, $f_i - \xi$. All edits are losslessly compressed (e.g., using ZSTD~\cite{zstd}). The choice of $N$ also balances computation and storage: smaller $N$ reduces iterations but increases lossless edits, while larger $N$ reduces lossless edits but increases iterations. In this paper, we set $N=5$ to balance computational and storage costs, based on our empirical evaluation on N (see Appendix~C).

\begin{figure}
    \centering
    \includegraphics[width=\linewidth]{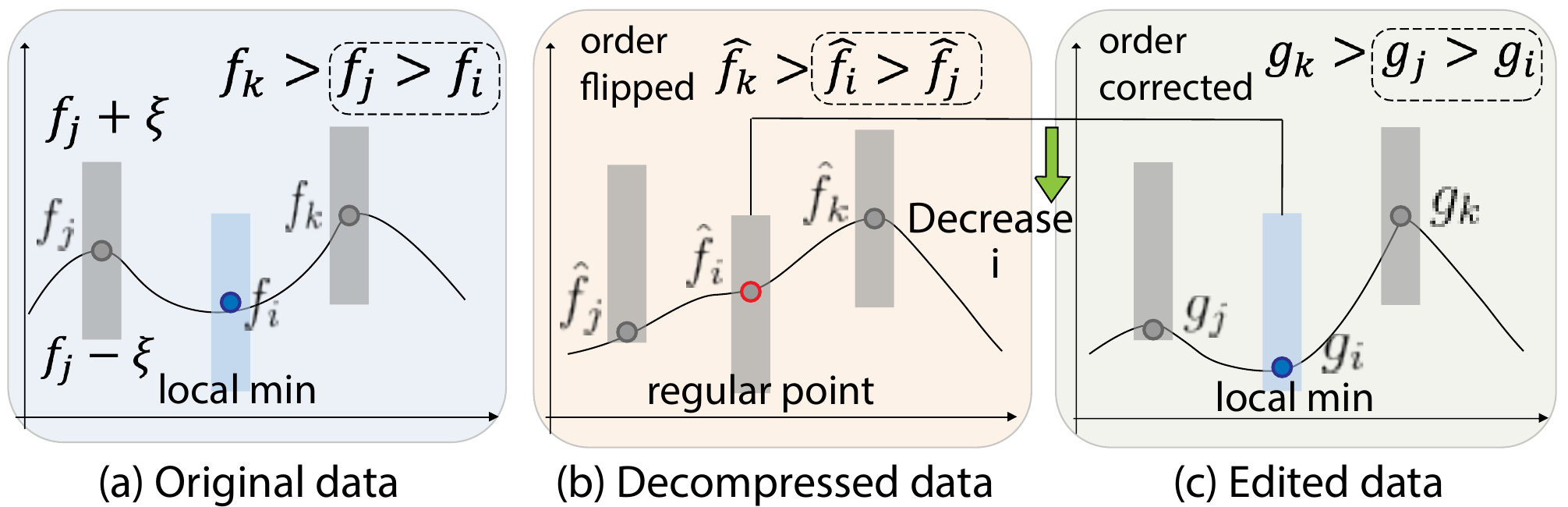}
   \caption{Monotonic edit strategy for correcting local ordering. (a) Original field with correct ordering. (b) Decompressed field with flipped ordering. (c) Edited field with restored ordering.}
    \label{fig:edit_strategy}
    \vspace{-0.2in}
\end{figure}

We take the preservation of local minima as an example. Suppose a true minimum $i$ ($f_k > f_j > f_i$, Figure~\ref{fig:edit_strategy}(a)) is inverted by compression errors such that $\hat{f}_k > \hat{f}_i > \hat{f}_j$ in Figure~\ref{fig:edit_strategy}(b). To restore the correct local ordering, we apply a decreasing edit to $i$ as shown in Figure~\ref{fig:edit_strategy}(c), producing a corrected value $g_i < g_j < g_k$ that still satisfies the user-defined error bound $\xi$ (i.e., $g_i > f_i - \xi$).
Note that modifying one vertex may introduce new violations in its one-hop neighborhood. The correction process iterates until all false cases are resolved.
To guarantee convergence, the edits are monotonic (e.g., decreasing). Since each vertex value $g_i$ can only decrease down to its absolute lower bound $f_i - \xi$, the process cannot iterate indefinitely. Building on this intuition, Section~\ref{sec:convergence} formally derives a theoretical iteration upper bound.

\section{Methodology}
This section presents our iterative workflow to preserve extremum graphs and contour trees based on ExTreeM~\cite{ExtreeM}: 
(1) constraints for extremum graph and contour tree preservation (Section~\ref{sec:problem}), 
(2) our iterative correction algorithm (Section~\ref{subsec:correction_algorithm}), 
(3) reformulated constraints for distributed correction (Section~\ref{subsec:distributed}), and 
(4) theoretical upper bound on the number of correction iterations (Section~\ref{sec:convergence}).
\subsection{Constraints Required for Extremum Graph and Contour Tree Preservation}\label{sec:problem}
We derive a set of constraints for extremum graph and contour tree preservation under error-bounded lossy compression based on ExTreeM~\cite{ExtreeM}. 
Given a user-defined absolute error bound $\xi$ (i.e., $|f_i - \hat{f}_i| \le \xi$), the contour tree of the decompressed field $\hat{f}$ matches that of the original field if the following structural constraints are satisfied.

\noindent\textbf{C1: Extremum Graph Constraints (EG Constraints).}
Since the extremum graph is the basis for contour tree construction in ExTreeM, it must be preserved, including both nodes and edges. The constraints are:
(1) \textbf{Critical point consistency.} The critical point type (minimum, maximum, or saddle) and spatial location of each vertex in $\hat{f}$ must match those in $f$;
(2) \textbf{Saddle--extrema connectivity.} The connectivity between saddles and extrema must be preserved. 

\noindent\textbf{C2: Saddle Ordering Constraints (Saddle Constraints).}
To preserve the global ordering of merge and split events, the relative scalar ordering among all saddles must be identical between $\hat{f}$ and $f$. Specifically, for any two saddles $s_1$ and $s_2$, $\hat{f}_{s_1} < \hat{f}_{s_2}$ if and only if $f_{s_1} < f_{s_2}$.

\noindent\textbf{C3: Merge and Split Event Constraints (Event Constraints).}
To preserve the merge tree branches extracted by Extremum Graph Pairing (Step 2 of ExTreeM), the local scalar ordering among extrema and saddles involved in each topological event must remain unchanged. Specifically:
(1) For each split (or join) saddle, the extremum with the lowest (or highest) scalar value among its connected extrema must remain unchanged;
(2) For each maximum (or minimum), the connected saddle with the highest (or lowest) scalar value must remain unchanged.

These constraints preserve the extremum graph structure and the EGP pairing process, which together determine the contour tree in ExTreeM. C1 preserves critical points and the min/max neighbors of each vertex, ensuring that the selection of steepest ascending and descending neighbors remains unchanged. As integral paths are uniquely determined by these steepest neighbors, this preserves all integral paths and the induced extremum graph, yielding the same extremum graph as in the original field.
C2 preserves the relative ordering among saddles, ensuring that EGP processes them in the same sequence. Combined with C3, which preserves all saddle--extrema pairing decisions in the EGP process, every extracted branch remains identical. 
Since both the join and split trees are preserved (by symmetric application of C1--C3), the contour tree is preserved.

\begin{algorithm}[htb]
\caption{Iterative Workflow of EXaCTz}
\label{alg:framework}
\begin{algorithmic}[1]
\renewcommand{\algorithmicrequire}{\textbf{Input:}}
\renewcommand{\algorithmicensure}{\textbf{Output:}}

\Require Original field $f$, decompressed field $\hat{f}$, error bound $\xi$
\Ensure Set of applied edits $\mathcal{E}$

\State $g \gets \hat{f}$ \Comment{Initialize edited field to decompressed data}
\State $\mathcal{E} \gets \emptyset$ \Comment{Initialize the set of edits}

\While{\textbf{true}}
    \State $\mathcal{S} \gets \mathrm{CheckConstraints}(g, f)$ \Comment{Detect violations in EG, saddle-ordering, and event constraints}
    
    \If{$\mathcal{S} = \emptyset$}
        \State \textbf{break} \Comment{No violations found, exit loop}
    \EndIf

    \State $(g, \Delta \mathcal{E}) \gets \mathrm{ApplyBoundedEdits}(g, \mathcal{S}, \xi)$
    \State $\mathcal{E} \gets \mathcal{E} \cup \Delta \mathcal{E}$ \Comment{Accumulate new edits}
\EndWhile

\State \textbf{return} $\mathcal{E}$

\end{algorithmic}

\end{algorithm}
\vspace{-0.1in}

\begin{figure*}[htb!]
    \centering
    \includegraphics[width=\linewidth]{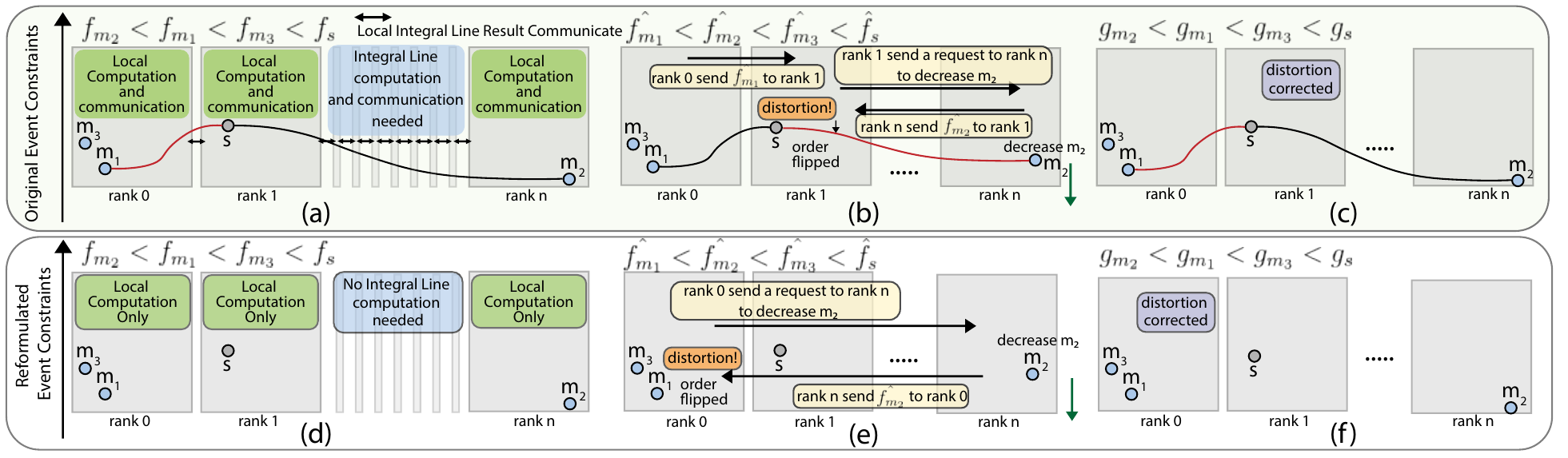}
    \caption{Comparison of the original and reformulated event constraints. (a-c) Original event constraints. (d-f) Reformulated event constraints.}
    \label{fig:workflow}
    \vspace{-0.2in}
\end{figure*}

\subsection{Iterative Correction Algorithm}
\label{subsec:correction_algorithm}

We enforce the constraints in an iterative correction framework (Algorithm~\ref{alg:framework}), where the edited field $g$ is initialized with the decompressed data $\hat{f}$.
Each iteration examines the edited field $g$ to detect violations and identify vertices that need edits, and apply bounded monotonic edits to resolve the violations. Our framework supports both increasing and decreasing edits; we use monotonically decreasing edits for simplicity.

\noindent\textbf{Extremum Graph Constraints } require preserving critical point identities and the connectivity between saddles and extrema. 
Since integral paths are determined by local steepest directions, connectivity can be maintained by preserving local scalar orderings.
For each vertex $i$, let $N_{\max}(i)$ and $N_{\min}(i)$ denote the largest and smallest neighbors in the original field $f$, and let $\hat{N}_{\max}(i)$ and $\hat{N}_{\min}(i)$ denote those in the edit field $g$. For extrema, each local maximum (or minimum) must remain larger (or smaller) than all its neighbors.
A violation occurs if $\hat{N}_{\max}(i) \ne N_{\max}(i)$ or $\hat{N}_{\min}(i) \ne N_{\min}(i)$. 
If $\hat{N}_{\max}(i) \ne N_{\max}(i)$, we decrease the value of $\hat{N}_{\max}(i)$; 
if $\hat{N}_{\min}(i) \ne N_{\min}(i)$, we decrease the value of ${N}_{\min}(i)$.
For saddle points, preserving only the steepest neighbors is insufficient, because saddles are centers where multiple components merge or split. We impose an additional constraint: the relative scalar ordering between a saddle and all its neighbors must be preserved, ensuring that the saddle's topological classification remains consistent with the original field $f$.

\noindent\textbf{Saddle Ordering Constraints} require preserving the scalar ordering among saddle vertices. We compute saddle ordering in $f$ as the reference ordering. 
During each iteration, for a saddle $i$, we examine $i$'s immediate predecessor and successor in the reference ordering. A violation is detected if $g_i$ no longer lies between those of its adjacent saddles in the sorted order.
Specifically, for two adjacent saddles $i$ and $j$ in the reference ordering, if $f_i > f_j$ but $g_i < g_j$, we decrease $g_j$. Conversely, if $f_i < f_j$ but $g_i > g_j$, we decrease $g_i$.

\noindent\textbf{Merge and Split Event Constraints} require preserving branches extracted by Extremum Graph Pairing (EGP). 
In $f$, for merge events, a branch is defined by a minimum--saddle pair satisfying: for a minimum $i$, let $s$ be the smallest saddle among all saddles connected to $i$; 
the pair $\langle i, s \rangle$ is extracted if $i$ is the largest minimum among all minima connected to $s$. Split branches are defined symmetrically.
Due to compression errors, the minimum selected by EGP for a saddle may change in $g$. 
For a saddle $s$, let $m_1$ be the correct minimum in $f$ and $m_2$ be the minimum selected in $g$. To correct this violation, we decrease $g_{m_2}$ to enforce $g_{m_2} < g_{m_1}$.
The largest (smallest) saddle connected to each extremum is implicitly preserved by C2, which maintains the scalar ordering among saddles. The same argument applies to split events.

\subsection{Reformulated Constraints for Distributed Iterative Correction}\label{subsec:distributed}
To handle large-scale datasets in distributed settings (e.g., in situ compression pipelines), where data is distributed across processes, we reformulate the event  constraints to enable efficient distributed execution.
The original event constraints are not suitable for distributed environments as explicit integral path tracing is needed to identify the extrema connected to each saddle. Integral paths frequently cross partition boundaries (Figure~\ref{fig:workflow}(a)), leading to heavy inter-process communications and a severe scalability bottleneck. To eliminate the scalability bottleneck caused by integral path tracing, the reformulated event constraints preserve the ordering of all critical points, as detailed below.

\noindent\textbf{Reformulation of Event Constraints.}
The reformulated event constraints rely on the properties established by the EG constraints. Since EG constraints preserve both critical point classification and saddle--extrema connectivity locally, enforcing the ordering of all critical points guarantees that the relative ordering of the connected extrema of each saddle remains unchanged. As a result, the valid edges extracted by the EGP process are preserved, which satisfies the original event constraints. As illustrated in Figure~\ref{fig:workflow}(d-f), the reformulated event constraints avoid explicit cross-boundary integral path computations. Instead, each iteration only requires exchanging the scalar value of each critical point with its immediate predecessor and successor (determined by the sorted sequence in the original data). Violations are detected by comparing the current ordering with the original one (Figure~\ref{fig:workflow}(e)), and corrected by applying edits to restore the correct order (Figure~\ref{fig:workflow}(f)). While the reformulated event constraints introduce more localized edits, our experiments later demonstrate that the overhead is slight in practice.

\noindent\textbf{Distributed Parallel Implementation.} 
We partition the global domain into subdomains with ghost layers for cross-boundary connections. During each iteration, processes locally correct violations and synchronize ghost layers. To avoid race conditions, we employ a consistent rule during stencil exchanges (e.g., prioritizing smaller scalar modifications) to ensure deterministic convergence. The process repeats until no violations remain across the entire domain.

\subsection{Theoretical Iteration Upper Bound for Edit-Based Topology Correction Strategy}
\label{sec:convergence}
We establish an upper bound on the number of iterations for EXaCTz, stated below. Here, $\mathcal{D}_{\max}(G_R)$ represents the maximum path length in the reduced vulnerability graph $G_R$. Formal definitions are provided later in this section.

\begin{theorem}\label{thm:convergence}
Given a maximum of $N$ allowed edits per vertex, the EXaCTz iterative correction algorithm converges within at most $N \cdot \mathcal{D}_{max}(G_R)$ iterations. 
\end{theorem}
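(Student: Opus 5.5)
We model the correction as a cascade over a directed \emph{vulnerability graph} $G$ and bound the number of iterations by a longest-path argument. The vertices of $G$ are the data vertices. There is an edge $u \to v$ whenever a decreasing edit at $u$ can create a new violation (of C1, C2, or C3 as reformulated in Section~\ref{subsec:distributed}) whose prescribed fix is an edit at $v$. Examples are $v$ becoming the wrong $\hat N_{\max}$ or $\hat N_{\min}$ of a neighbor of $u$, a saddle-order successor dropping below $u$, or an extremum adjacent to $u$ in the global critical-point order. Since every edit strictly decreases a value, edges that could close a cycle can be contracted: inside a strongly connected component the same ordering relation cannot be violated in both directions indefinitely, because the repair always lowers the vertex whose \emph{original} value is smaller. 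Contracting strongly connected components gives the reduced DAG $G_R$, and we write $\mathcal{D}_{\max}(G_R)$ for the number of vertices on its longest path.

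\textbf{Step 1 (cascade structure).} We first show that any edit made in iteration $t+1$ is \emph{caused} by an edit made in iteration $t$, with one exception: edits fixing violations already present in $\hat f$, which we call \emph{roots}. The reason is that a constraint satisfied at the start of iteration $t$ can only become violated if one of the vertices it involves changes value. Following causes backwards from any edit therefore gives a chain $v_0 \to v_1 \to \cdots \to v_t$ in $G$, where each consecutive pair is the edge that triggered the edit and $v_0$ is a root.

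\textbf{Step 2 (charging edits along paths).} Next we project this chain onto $G_R$. We must show that the chain moves strictly forward in $G_R$ except when it repeats a vertex or stays in the same component. Each such stay or repeat consumes one of the at most $N$ edits available to some vertex. After $N$ edits a vertex is pinned losslessly at $f_i-\xi$, and by the order-preserving nature of the targets it can no longer be the offending vertex.

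\textbf{Step 3 (counting).} A potential argument then finishes the proof. Assign each vertex $v$ the depth $d(v)$ of its component in $G_R$ and the count $e(v)$ of edits already spent at $v$. We show that the iteration index of any edit at $v$ is at most $N\cdot d(v)$. This goes by induction on $d(v)$: a vertex at depth $d$ can only be re-triggered by predecessors of depth less than $d$, and it can absorb at most $N$ of those triggers. Taking the maximum over all vertices gives at most $N\cdot\mathcal{D}_{\max}(G_R)$ iterations.

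The main obstacle is Step 2, specifically the claim that cycles among mutually dependent constraints cannot keep producing fresh edits within one component of $G_R$ beyond $N$ per vertex. Saddle-order constraints and min/max-neighbor constraints can interleave. To handle this, I would first try to prove that every violation's fix decreases the vertex that should be smaller in $f$. Then any directed cycle of triggers returns to a vertex whose value has strictly dropped by $\Delta$. That bounds the cycle's contribution by the $N$ steps per vertex, which is what justifies contracting strongly connected components and multiplying by $N$ rather than by the component size.
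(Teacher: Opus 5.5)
Your Step~3 is the paper's argument: induction on depth, with at most $N$ iterations per level of $G_R$. The graph you run it on, however, comes from an invalid step. Contracting strongly connected components and then charging only $N$ iterations per component does not follow from the per-vertex budget. Suppose triggers could circulate around a cycle $v_1\to v_2\to\cdots\to v_k\to v_1$. Each lap costs $k$ iterations and one edit at each of the $k$ vertices. The budget therefore caps the number of laps at $N$, but the number of iterations only at $kN$, while the contracted component has depth $1$. For the same reason, your inductive claim that a vertex at depth $d$ ``can only be re-triggered by predecessors of depth less than $d$'' fails for vertices that share a component.

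You already note that every repair lowers the vertex whose \emph{original} value is smaller. Use that fact to rule cycles out, not to contract them. Charge each edit of $v$ to a flipped pair $(u,v)$ with $f_u>f_v$ and $g_u\le g_v$, where the flip is either present in $\hat{f}$ or created when $u$ was lowered, and orient the edge $u\to v$. Every such edge goes from larger to smaller $f$, so the graph is a DAG by construction. This is exactly how the paper builds $G_V$, and $G_S$ and $G_R$ inherit acyclicity as subgraphs.

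Your trigger graph, defined by ``which vertex is edited next'', can contain upward edges. For example, lowering the wrong maximum neighbor $w$ of a vertex $i$ can expose another neighbor $w'$ with $f_{w'}>f_w$ as the new wrong maximum. That edit should be charged to the pair $(N_{\max}(i),w')$, not to $w$.

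Second, the theorem bounds iterations by the longest path in the paper's reduced vulnerability graph, not in a graph of your own definition. That graph has pairs $u\to v$ with $f_u>f_v$ and $|f_u-f_v|\le 2\xi$, kept only if $\hat{f}_v\ge f_u-\xi$, and restricted to what is reachable from seeds with $\hat{f}_u\le\hat{f}_v$. You therefore need a containment step, and it is short. Values only decrease and never go below $f_u-\xi$, so any flip satisfies $f_u-\xi\le g_u\le g_v\le\hat{f}_v\le f_v+\xi$. Hence a flip can occur only on a strong vulnerability pair. By your Step~1, every flip that triggers an edit is reached from a seed, so it lies in $G_R$. With these two repairs your proof becomes the paper's.

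One caveat applies to both your argument and the paper's. The inductive step assumes that a vertex with a live flip is edited in every iteration once its predecessors are stable. The EG rule, however, lowers only the \emph{current} wrong maximum neighbor of each vertex, so co-neighbors at the same depth may be lowered one after another. This assumption needs separate justification.
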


\begin{figure}
    \centering
    \includegraphics[width=\linewidth]{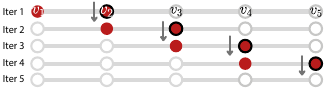}
    \caption{Cascading effect in a 1D sequence. Red circles denote flip pairs (i.e., adjacent vertices with inverted scalar ordering) at each iteration. }
    \label{fig:cast}
    \vspace{-0.2in}
\end{figure}

\noindent\textbf{Intuition: cascading effects in iterative correction.} Consider a cascading sequence to analyze the worst-case number of correction iterations. 
As an example, a 1D sequence of five vertices (Figure~\ref{fig:cast}) shows that resolving one flipped pair ($v_1$ and $v_2$) may trigger subsequent flips ($v_2$ and $v_3$), causing corrections to propagate along the sequence. While this propagation is linear in 1D, higher-dimensional datasets introduce more complex dependencies. To bound the maximum number of iterations, we model such cascades as paths in a dependency graph, as described below.

\noindent\textbf{Vulnerability graph construction.} 
To analyze the cascading effects in the correction process, we construct a sequence of graphs. Let $V$ be the set of all vertices in the scalar field, $f$ and $\hat{f}$ denote the original and decompressed data, respectively, and $\xi$ denote the prescribed absolute error bound.

\begin{figure*}[htb]
    \centering
    \includegraphics[width=\linewidth]{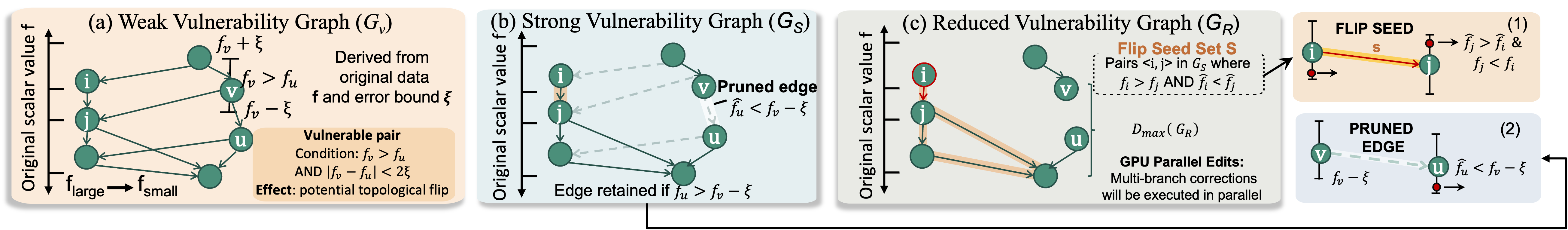}
    \caption{Illustration of the data-aware convergence bound for EXaCTz. 
    (a) The weak vulnerability graph $G_V$ captures all potential order flips based solely on the prescribed error bound. 
    (b) The strong vulnerability graph $G_S$ that prunes edges using the decompressed data. 
    (c) The reduced vulnerability graph $G_R$ that traces cascading paths exclusively from actual flip seeds $S$.
    (d) Detailed definitions.}
    \label{fig:convergence_bound}
    \vspace{-0.2in}
\end{figure*}

Given the input data $f$, we first identify all potential topological violations based solely on the prescribed error bound $\xi$. We define a \textit{weak vulnerability pair} as adjacent vertices $(u,v)$ satisfying $|f_u - f_v| \le 2\xi$, which is the prerequisite for any order flip (i.e., scalar ordering violations that break our preservation constraints either between adjacent neighbors or across the sorted sequence of critical points). Adding a directed edge $u \to v$ for every such pair (where $f_u > f_v$) forms the \textbf{\textit{Weak Vulnerability Graph}} $G_V = (V, E_V)$, as shown in Figure~\ref{fig:convergence_bound}(a). Because all edges point from larger to smaller original values, $G_V$ is inherently a Directed Acyclic Graph (DAG), as any directed cycle would imply a decreasing chain of values that returns to its starting point, which is impossible.
Next, we prune $G_V$ using the actual decompressed field $\hat{f}$. Because our monotonic edit strategy only decreases scalar values, a vertex $u$ can never drop below its absolute minimum $f_u - \xi$. If $\hat{f}_v < f_u - \xi$, $u$ will always remain greater than $v$, making the edge unconditionally safe. We prune these safe edges and retain only the \textit{strong vulnerability pairs} where $\hat{f}_v \ge f_u - \xi$, forming the \textbf{\textit{Strong Vulnerability Graph}} $G_S$, as shown in Figure~\ref{fig:convergence_bound}(b).
We then extract the \textbf{\textit{Reduced Vulnerability Graph}} $G_R$ to capture the exact scope of the corrections. Corrections in EXaCTz originate exclusively from actual flip seeds (pairs where scalar values are actually inverted in the decompressed data), where we define as the flip seed set $S = \{ (u,v) \in E_S \mid \hat{f}_u \le \hat{f}_v \}$. By tracing cascading propagation paths from flip seeds within $G_S$, we extract $G_R$, as shown in Figure~\ref{fig:convergence_bound}(c). The term $\mathcal{D}_{\max}(G_R)$ is thus defined as the length of the longest directed path in $G_R$.

\subsubsection{Proof of Bounded Convergence}

Following the monotonic edit strategy, each vertex is decreased by at most one step $\Delta = \xi / N$ per iteration, requiring at most $N$ iterations to reach the lower bound $f_v - \xi$.

\textit{Proof.}
We prove by induction on the depth of vertices in $G_R$.
Define the depth $\mathrm{dep}(v)$ as the maximum number of vertices along any directed path from a seed pair endpoint to $v$. Then $\max_v \mathrm{dep}(v) = D_{\max}(G_R)$.
\textbf{Base case ($\mathrm{dep}(v)=1$).}
Vertices with $dep(v)=1$ correspond to endpoints of violation pairs in the seed set. Since each vertex can be edited at most $N$ times and at most once per iteration, all such vertices stabilize within $N$ iterations.
\textbf{Inductive step.}
Assume all vertices with depth at most $d-1$ stabilize within $N(d-1)$ iterations.
Consider a vertex $v$ with $\mathrm{dep}(v)=d$. By the observation above, any new violation involving $v$ can only be introduced by its predecessors. Since all predecessors have depth at most $d-1$, they stabilize by iteration $N(d-1)$.
After that point, no new violations involving $v$ can be introduced. Since $v$ can be edited at most $N$ times, it stabilizes within an additional $N$ iterations.
Thus, all vertices with depth at most $d$ stabilize within $Nd$ iterations.
Setting $d = D_{\max}(G_R)$ completes the proof.

\noindent\textbf{Comparing Theoretical Max Iteration v.s. Actual Iteration}
While the established theoretical upper bound guarantees convergence, worst-case cascading edits are exceedingly rare in practice. To provide context for the theoretical iteration limit, Table~\ref{tab:topo_metrics} compares the theoretical maximum iterations against the actual iterations observed on real-world datasets. Specifically, we report four metrics representing the proportion of affected vertices relative to the total vertex count $|V|$ (all reported as percentages): 
$|G_V|$ ($|G_V|/|V|$), the fraction of vertices in the weak vulnerability graph; 
$|G_S|$ ($|G_S|/|V|$), the fraction in the strong vulnerability graph; 
$|G_R|$ ($|G_R|/|V|$), the fraction in the reduced vulnerability graph; and 
\textit{Edit}, the fraction of vertices actually edited by EXaCTz.
We then compare the empirical number of correction iterations (\textbf{Actual Iter.}) against the worst-case bound (\textbf{Theo. Max Iter.}) established in Theorem~\ref{thm:convergence}.

\begin{table}[htb]
\small
\centering
\caption{Iteration numbers for EXaCTz across different datasets under a relative error bound $\xi = 10^{-4}$ (normalized to the data range, using ZFP as base compressor). All ratios are in percentages (\%).}
\label{tab:topo_metrics}
\renewcommand{\arraystretch}{1.2}
\setlength{\tabcolsep}{2.3pt}
\begin{tabular}{lccccccc}
\toprule
Dataset & Dimension &  \begin{tabular}{c}$|G_V|$\\(\%)\end{tabular}& \begin{tabular}{c}$|G_S|$\\(\%)\end{tabular}& \begin{tabular}{c}$|G_R|$\\(\%)\end{tabular}& \begin{tabular}{c}Edit\\(\%)\end{tabular} &  \begin{tabular}{c}\textbf{Theo. Max}\\\textbf{Iter.}\end{tabular} & \begin{tabular}{c}\textbf{Actual}\\\textbf{Iter.}\end{tabular} \\
\midrule
QMCPack    & $69\times69\times115$ & 23.69 & 13.41  & 1.31 & 1.40 & 275 & 71 \\
AT         & $177\times95\times48$ & 5.78  & 3.00 & 0.02 & 0.01  & 85 & 3 \\
Vortex     & $128^3$               & 12.43  & 7.16 & 0.43 & 0.75  & 735 & 160 \\
Turbulence & $256^3$               & 17.22  & 11.50 &  4.70 & 4.28  & 1015 & 169 \\
NYX ($512^3$) & $512^3$            & 35.20 & 20.93 & 0.76 & 0.73 & 2805 & 25 \\
Combustion & $560^3$               & 38.40 & 24.49 & 0.44 & 0.31 & 3025 & 56 \\
\bottomrule
\end{tabular}
\vspace{-0.2in}
\end{table}

As detailed in Table~\ref{tab:topo_metrics}, the actual iterations required in practice remain orders of magnitude lower than the worst-case theoretical limits. While a significant fraction of vertices initially appear at risk (weak vulnerability), the reduced vulnerability graph effectively prunes unconditionally safe edges. Furthermore, actual topological inversions (flips) in the decompressed data are exceedingly rare, resulting in a low final edit ratio. This result suggests that the theoretical bound can serve as a conservative guide for selecting error bounds, and we leave its practical use as future work.

\section{Evaluation}

We evaluate EXaCTz (our method) from three perspectives: 
(1) single-node performance, including comparison with TopoA under serial CPU implementation, evaluation on a single GPU, and the trade-off introduced by the reformulated event constraints;
(2) scalability study for distributed correction; and 
(3) preservation of the extremum graph and contour tree under a prescribed error bound.
\subsection{Experiment Setup}
All experiments are conducted on the Perlmutter supercomputer at NERSC, using nodes with four NVIDIA A100 GPUs (40~GB VRAM each). The largest dataset EXaCTz can process on a single GPU is 5~GB, because EXaCTz requires approximately 8$\times$ the raw data size in memory (to store the original and decompressed fields, as well as
critical point indices, sorted orders, and extremum graph connectivity).
Our implementation uses CUDA and CUDA-aware MPI for direct GPU-to-GPU communications. 

\noindent\textbf{Datasets.} 
We use a variety of real-world scientific datasets from cosmology, combustion, and climate applications, with sizes ranging from 4~MB to 512~GB, in both single-node and distributed environments. 
\noindent\textbf{Baseline.} 
We compare against TopoA~\cite{gorski2025general}, a state-of-the-art contour-tree-preserving baseline that outperforms prior methods such as TopoSz~\cite{toposz} in efficiency. For fair comparison, a serial CPU version of EXaCTz is implemented and evaluated against the serial CPU implementation of TopoA.
Additionally, we include pMSz~\cite{li_msz} (a method for preserving Morse--Smale segmentations) for quality comparison on extremum graphs and contour trees.

\noindent\textbf{Base Compressors Configuration.} We evaluate EXaCTz using five lossy compressors: two CPU-based (SZ3, ZFP) and three GPU-accelerated (cuZFP, MGARD-GPU~\cite{MGARD}, cuSZp). As cuZFP lacks strict pointwise error bounds, we use the actual maximum absolute error between original and decompressed data as the effective error bound $\xi$. While Tables~\ref{tab:single_node} and~\ref{tab:multi_node} evaluate all five compressors, other experiments use SZ3 as the representative baseline due to its high compression ratio, which creates more challenging topology correction cases. Appendix provides additional results, including a parameter study on $N$, evaluations with more compressors (e.g., TTHRESH~\cite{TTHRESH}, SPERR~\cite{SPERR}), and further analysis.  Unless specified, we use a fixed relative error bound $\xi = 10^{-4}$ (normalized to the data range), corresponding to an absolute error bound in the normalized scalar field.

\noindent\textbf{Evaluation Metrics.} 
We evaluate EXaCTz using three categories of metrics: 
(1) \textit{Compression Efficiency}: We report the \textit{Compression Ratio (CR)} (original data size over compressed size), \textit{Overall Compression Ratio (OCR)} (original size over compressed size plus correction edits), and the \textit{Edit Ratio} (fraction of edited data points by EXaCTz). 
(2) \textit{Performance and Scalability}: We measure Overall Throughput (OT), defined as the original data size divided by total correction time, reported only for single-GPU settings, as in multi-GPU settings base compressors operate on independent local blocks while EXaCTz enforces global topology. We also report \textit{strong scaling efficiency}, $E(p) = \frac{T(1)}{p\,T(p)}$, and \textit{weak scaling efficiency}, defined as $W(p) = \frac{T(1)}{T(p)}$, which measures the ability to maintain constant runtime as the dataset size scales proportionally with the number of GPUs.
(3) \textit{Topological Feature Preservation}: Existing topology metrics (e.g., Wasserstein distance~\cite{mileyko2011probability}) do not directly capture the exact structural consistency required by EXaCTz. Therefore, we quantify structural consistency between $f$ and $\hat{f}$ at three levels using recall metrics: \textit{CP-Recall} compares the sets of critical points (a match requires both the spatial location and type to remain unchanged); \textit{EG-Recall} evaluates the fraction of saddle-extremum edges that are preserved; and \textit{CT-Recall} evaluates the fraction of merge and split arcs that are preserved.

\subsection{Single-Node Evaluation}
We evaluate EXaCTz to characterize both algorithmic efficiency under serial CPU execution and performance on a single GPU.
\subsubsection{Comparison with TopoA under serial CPU implementation}
\begin{mdframed}
\noindent\textbf{Observation 1.} In a serial CPU setting, EXaCTz runs up to \textbf{213$\times$} faster and delivers up to \textbf{2.65$\times$} higher OCR than TopoA.
\end{mdframed}
To ensure a fair comparison with TopoA, we make two modifications to EXaCTz. 
First, we add a topology simplification step ($\epsilon = 10^{-5}$) using TTK~\cite{ttk}. While EXaCTz targets full contour tree preservation, TopoA requires simplification to complete within practical time limits. For consistency, both original and decompressed fields are simplified using the same threshold $\epsilon$ via the TTK~\cite{ttk} simplification module, and the contour tree extracted from the simplified original field serves as the reference.
Second, we implement a serial CPU version of EXaCTz to match TopoA’s execution environment. 
All other settings are identical, including the error bound ($\xi = 10^{-4}$).
\begin{figure}[htb]
\centering
\includegraphics[width=\linewidth]{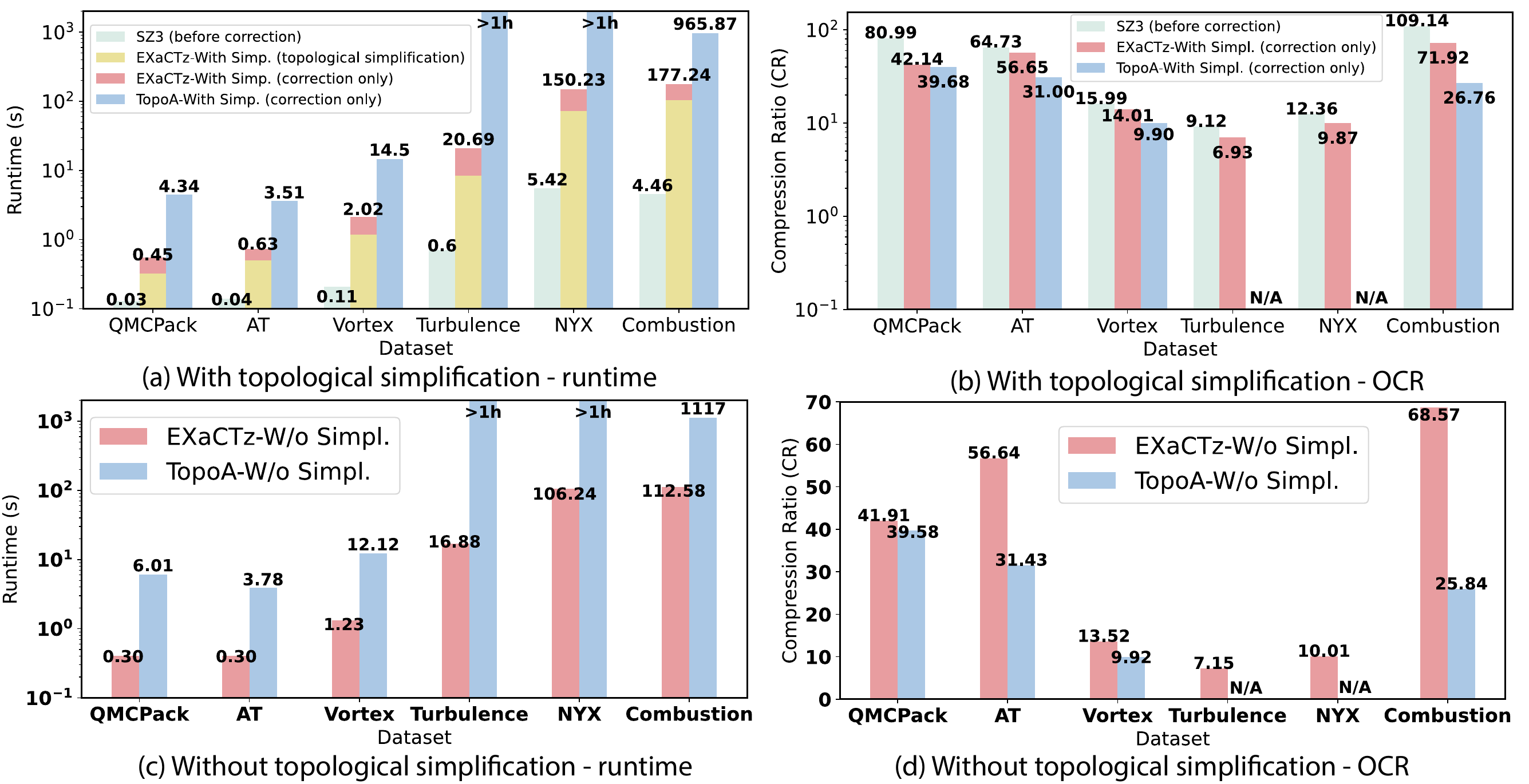}
\caption{Serial CPU comparison: TopoA vs. EXaCTz (SZ3, $\xi = 10^{-4}$). (a) Runtime (log scale) and (b) OCR with topological simplification ($\epsilon = 10^{-5}$); (c, d) corresponding metrics without simplification. N/A indicate out-of-time execution ($>1$ hour).}
\label{fig:topoA_comparison}

\end{figure}

\noindent\textbf{Runtime.} 
As shown in Figure~\ref{fig:topoA_comparison}, EXaCTz achieves speedups of up to \textbf{171$\times$} and \textbf{213$\times$} over TopoA, with and without topological simplification, respectively. While both methods complete successfully on smaller datasets like QMCPack and Adenine Thymine (AT), TopoA's runtime increases rapidly with data size due to explicit contour tree reconstruction overhead. On the same $256^3$ turbulence dataset, EXaCTz finishes in 20.69 seconds with simplification and 16.88 seconds without simplification, whereas TopoA fails to converge within the one-hour limit. This results in a 213$\times$ speedup relative to the one-hour time limit that TopoA exceeded.
We also observe that persistence-based topological simplification itself incurs non-negligible cost, especially on large datasets, and is difficult to scale as the data size grows. This further motivates preservation strategies that directly enforce full contour tree consistency without repeated global restructuring, thereby improving scalability.
\textbf{Overall Compression Ratio.} 
From the compression perspective, EXaCTz achieves up to 2.4$\times$ and 2.65$\times$ higher OCRs than TopoA, with and without topological simplification, respectively, as shown in Figure~\ref{fig:topoA_comparison} (b, d).

\subsubsection{Single-GPU Performance}
\begin{mdframed}
\noindent\textbf{Observation 2.} In a single-GPU configuration, EXaCTz completes topology correction for all datasets within 1.6 seconds, achieving GB/s-scale throughput (e.g., 3.85 GB/s on the SZ3-compressed Combustion dataset), compared to MB/s-scale throughput of TopoA ($\sim$1 MB/s).
\end{mdframed}
We evaluate the performance of EXaCTz on a single GPU under full contour tree preservation without topological simplification. 

\noindent\textbf{Runtime.} As shown in Table~\ref{tab:single_node}, EXaCTz typically completes the topological correction within 1.6 seconds on a single GPU (maximum 1.58s for cuZFP on NYX ($512^3$)). Compared to prior CPU-based topology-preserving methods, {\tool} reduces runtime by orders of magnitude, reaching up to $3285\times$ improvement over TopoA. Notably, correction time matches the order of magnitude of base compression time, even for fast GPU-native compressors. Larger datasets require slightly more time as complex structures induce more local distortions and additional iterations.
\noindent\textbf{Overall Throughput.} 
EXaCTz achieves a practical throughput, ensuring that topology correction does not become a severe bottleneck in the compression pipeline. For example, EXaCTz achieves a throughput of 3.85 GB/s on the Combustion dataset (SZ3 compressed), where TopoA only delivers a throughput of $\sim$1 MB/s. Compared with GPU-native compressors such as cuSZp, EXaCTz offers a practical trade-off between throughput, compression efficiency, and exact topological guarantees. As shown in Table~\ref{tab:single_node}, cuSZp achieves extreme speed by severely sacrificing compression ratio, whereas EXaCTz guarantees exact topology and maintains high overall CRs at practical speeds.
\noindent\textbf{Overall Compression Ratio.} Even with contour tree preservation, the OCR of EXaCTz remains practical. Because EXaCTz modifies only a sparse subset of data, the required edit ratio is generally well below 15\% (with rare exceptions like cuSZp on NYX ($512^3$)).
\noindent\textbf{Impact of Base Compressors.} Different base compressors exhibit distinct performance trade-offs. SZ3 achieves higher OCRs (e.g., 42.58 on QMCPack) due to its inherently higher base CR. ZFP and cuZFP tend to require more iterations (e.g., up to 224 and 183, respectively). This is because ZFP’s block-based transform tends to introduce more severe initial topological distortions. For cuZFP, the lack of strict pointwise error bounds can further amplify this effect. Meanwhile, cuSZp yields lower OCRs, which aligns with cuSZp's primary design objective of maximizing computational throughput rather than compression efficiency.

\begin{table}[htb]
\small
\centering
\caption{Single-GPU performance of EXaCTz on a single GPU. 
(SZ3, ZFP, cuSZp, MGARD-GPU: $\xi=10^{-4}$; cuZFP: target CR=10). 
Throughput is in GB/s. EXaCTz time reports correction only. 
SZ3 (32 threads) and ZFP are CPU-based; MGARD-GPU, cuZFP, cuSZp, and EXaCTz are GPU-based.}
\label{tab:single_node}
\renewcommand{\arraystretch}{0.8} 
\setlength{\tabcolsep}{0.8pt} 
\begin{tabular}{l c c c c c | c c c c}
\toprule
\multirow{2}{*}{Dataset} & \multirow{2}{*}{\shortstack{Size\\(GB)}} & \multicolumn{4}{c|}{\shortstack{Stage 1 \\ Compression-only}} & \multicolumn{4}{c}{\shortstack{Stage 2 \\ EXaCTz Correction-only}} \\
\cmidrule(lr){3-6} \cmidrule(lr){7-10} 
& & Type & \begin{tabular}{c}Time\\(s)\end{tabular}  & \begin{tabular}{c}Thrpt.\\(GB/s)\end{tabular}& CR & \begin{tabular}{c}Time\\(s)\end{tabular} & \begin{tabular}{c}Overall\\Thrpt. (GB/s)\end{tabular} & Iter. & OCR \\
\midrule
\multirow{5}{*}{QMCPack} & \multirow{5}{*}{0.004} 
& SZ3 & 0.01 & 0.40 & 80.99 & 0.05 & 0.08 & 37 & 42.58 \\
& & ZFP & 0.09 & 0.04 & 16.17 & 0.03 & 0.13 & 71 & 15.17 \\
& & cuSZp & <0.01 & >0.40 & 7.90 & 0.27 & 0.01 & 16 & 6.75 \\
& & cuZFP & <0.01 & >0.40 & 9.09 & 0.56 & 0.01 & 115 & 6.72\\
& & MGARD-GPU & 0.09 & 0.04 & 3.60 & 0.07 & 0.06 & 36 & 3.50 \\
\midrule
\multirow{5}{*}{AT} & \multirow{5}{*}{0.006} 
& SZ3 & 0.02 & 0.30 & 64.73 & 0.01 & 0.60 & 7 & 56.59 \\
& & ZFP & 0.01 & 0.60 & 13.32 & 0.04 & 0.15 & 3 & 13.32 \\
& & cuSZp & 0.01 & 0.60 & 4.91 & 0.16 & 0.04 & 7 & 4.83 \\
& & cuZFP & 0.04 & 0.15 & 9.72 & 0.09 & 0.07 & 8 & 9.70 \\
& & MGARD-GPU & 0.08 & 0.07 & 5.02 & 0.05 & 0.12 & 5 & 4.94\\

\midrule
\multirow{5}{*}{Vortex} & \multirow{5}{*}{0.015} 
& SZ3 & 0.05 & 0.30 & 15.99 & 0.03 & 0.50 & 26 & 13.98 \\
& & ZFP & 0.04 & 0.38 & 8.02 & 0.10 & 0.15 & 160 & 7.78 \\
& & cuSZp & 0.01 & 1.50 & 5.64 & 0.23 & 0.07 & 21 & 5.39 \\
& & cuZFP & 0.01 & 1.50 & 9.99 & 1.13 & 0.01 & 131 & 8.70 \\
& & MGARD-GPU & 0.15 & 0.10 & 6.65 & 0.06 & 0.25 & 49 & 6.45 \\
\midrule
\multirow{5}{*}{Turbulence} & \multirow{5}{*}{0.125} 
& SZ3 & 0.38 & 0.33 & 9.12 & 0.14 & 0.89 & 24 & 7.15 \\
& & ZFP & 0.30 & 0.42 & 5.64 & 0.42 & 0.30 & 169 & 4.76 \\
& & cuSZp & 0.02 & 6.25 & 6.32 & 0.54 & 0.23 & 24 & 5.24 \\
& & cuZFP & 0.02 & 6.25 & 9.99 & 0.93 & 0.13 & 115 & 5.36 \\
& & MGARD-GPU & 0.41 & 0.30 & 4.98 & 0.27 & 0.46 & 122 & 4.34 \\
\midrule
\multirow{5}{*}{NYX ($512^3$)} & \multirow{5}{*}{1.000} 
& SZ3 & 3.21 & 0.31 & 12.36 & 0.37 & 2.70 & 25 & 9.97 \\
& & ZFP & 2.77 & 0.36 & 6.00 & 0.89 & 1.12 & 165 & 5.75 \\
& & cuSZp & 0.01 & 100.00 & 5.84 & 0.53 & 1.89 & 25 & 4.72 \\
& & cuZFP & 0.11 & 9.09 & 9.99 & 1.58 & 0.63 & 183 & 5.63 \\
& & MGARD-GPU & 1.57 & 0.64 & 10.53 & 0.55 & 1.82 & 64 & 9.09 \\
\midrule
\multirow{5}{*}{Combustion} & \multirow{5}{*}{1.310} 
& SZ3 & 3.32 & 0.39 & 109.14 & 0.34 & 3.85 & 56 & 70.48 \\
& & ZFP & 3.21 & 0.41 & 15.42 & 0.98 & 1.34 & 224 & 15.11 \\
& & cuSZp & 0.01 & 131.00 & 5.10 & 0.29 & 4.52 & 22 & 4.84 \\
& & cuZFP & 0.06 & 21.83 & 9.99 & 1.35 & 0.97 & 22 & 9.92 \\
& & MGARD-GPU & 2.72 & 0.48 & 19.32 & 0.56 & 2.34 & 103 & 16.82 \\
\bottomrule
\end{tabular}
\vspace{-0.1in}
\end{table}

\subsubsection{Computation and Storage Trade-Offs of (Reformulated) Event Constraints}
\begin{mdframed}
\noindent\textbf{Observation 3.}
Across the tested error bounds, the reformulated event constraints make the correction process 1.5$\times$ faster, the OCR drop remains modest relative to the runtime reduction.
\end{mdframed}
Figure~\ref{fig:reformulation_tradeoff} quantifies the performance impact of reformulating event constraints on the NYX dataset ($512^3$) across relative error bounds from $10^{-6}$ to $10^{-3}$. By replacing explicit integral path tracing with critical point scalar ordering constraints, the reformulated event constraints consistently accelerate the correction process. As shown in Figure~\ref{fig:reformulation_tradeoff}(a), it achieves up to a 1.5$\times$ speedup in correction time on a single GPU for SZ3 and ZFP. Although this acceleration may trade off storage efficiency via additional edits, Figure~\ref{fig:reformulation_tradeoff}(b) shows the OCR decrease remains small across compressors. In practice, this reformulation is a user-configurable option, enabling domain scientists to balance computational efficiency and storage overhead.

\begin{figure}[htb]
    \centering
    \includegraphics[width=\linewidth]{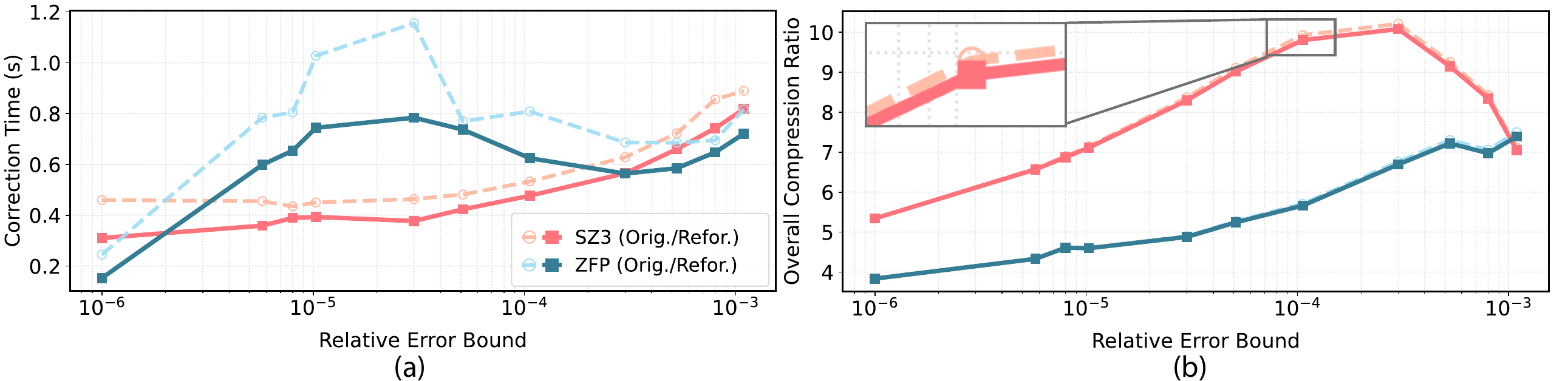}
    \caption{Trade-offs of reformulated event constraints on NYX ($512^3$) across varying error bounds using SZ3 and ZFP: (a) correction time and (b) OCR. Red/blue lines: SZ3/ZFP. Solid/dashed lines: reformulated/original event constraints.}
  \label{fig:reformulation_tradeoff}
  \vspace{-0.1in}
\end{figure}

\subsection{Distributed and Multi-GPU Evaluation}
\begin{mdframed}
\noindent\textbf{Observation 4.}
The reformulated event constraints significantly enhance scalability, improving parallel efficiency from 6.4\% to 55.6\% at 128 GPUs, and enabling correction for datasets up to 512~GB in under 48 seconds (with an aggregate throughput of up to 32.69~GB/s) in distributed settings.
\end{mdframed}
We evaluate distributed multi-GPU scalability and parallel efficiency. Strong scaling is evaluated on 1 to 16 GPUs using a fixed $1024^3$ global problem size. Weak scaling is performed on up to 128 GPUs (fixed $512^3$ subdomain per GPU), and includes an analysis of the reformulated event constraints. We also report performance on larger datasets. 

\noindent\textbf{Strong Scaling.} We evaluate strong scaling by fixing the global input size to a $1024^3$ scalar field and increasing the number of GPUs from 1 to 16. As shown in Figure~\ref{fig:strong}, the strong scaling efficiency remains close to ideal at small scales (99.5\% at 2 GPUs and 92.5\% at 4 GPUs), and gradually decreases to 79.1\% at 8 GPUs and 64.1\% at 16 GPUs. The efficiency drop is expected under strong scaling. When the global problem size is fixed, increasing the number of GPUs reduces the amount of local correction work assigned to each GPU. However, boundary communications, including ghost-layer exchange and critical-point value synchronization, does not decrease proportionally. As a result, the computation time per GPU shrinks faster than the communication time, increasing the relative impact of communication overhead and lowering parallel efficiency at larger GPU counts.

\begin{figure}[htb]
    \centering
    \includegraphics[width=\linewidth]{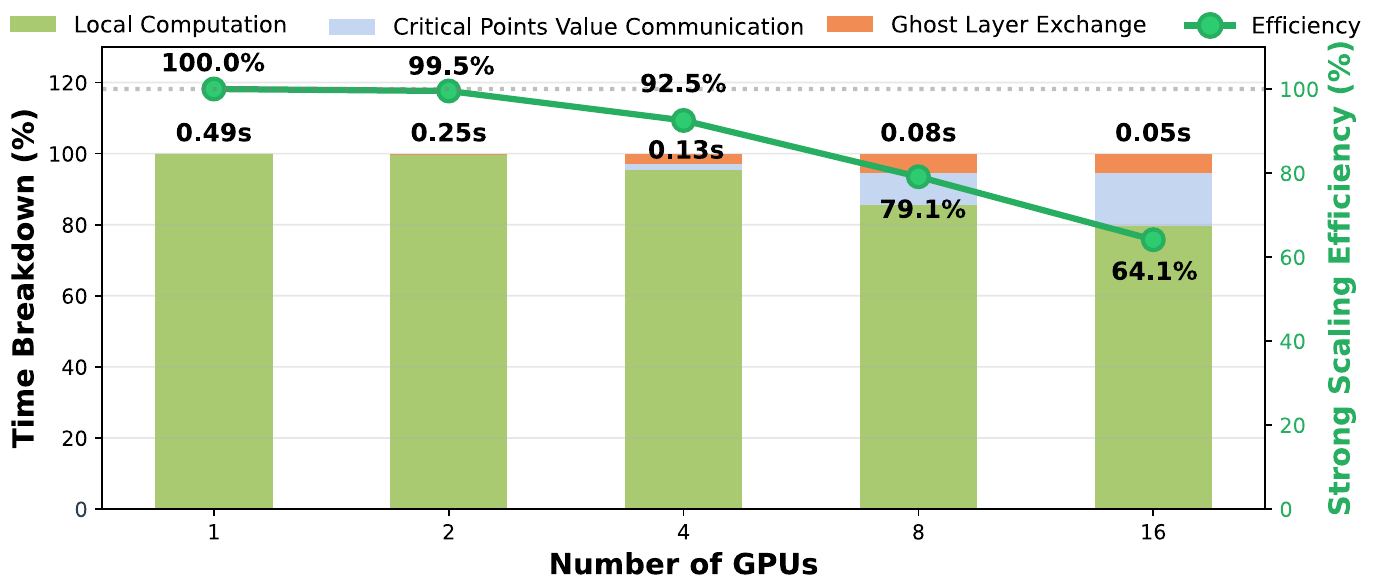}
    \caption{Strong scaling results of EXaCTz on a fixed-size $1024^3$ NYX dataset. The stacked bars show the percentage time breakdown of the execution, and the line shows the corresponding strong scaling efficiency.}
    \label{fig:strong}
    \vspace{-0.1in}
\end{figure}

\noindent\textbf{Weak Scaling.} The weak scaling behavior of EXaCTz is evaluated as the system scales up to 128 GPUs with a fixed $512^3$ subdomain per GPU. 
We generate larger datasets by upsampling a $512^3$ NYX volume. The upsampling is designed to yield similar compression distortions across blocks, as the effective problem complexity is more closely tied to topology distortion than to resolution alone. 

Figure~\ref{fig:weak} shows that EXaCTz achieves strong weak-scaling performance. With reformulated event constraints, runtime remains nearly constant up to 64 GPUs, maintaining over 80\% efficiency and achieving up to $33\times$ speedup over the original formulation. At 128 GPUs, EXaCTz reaches 55.6\% efficiency, compared with 6.4\% using the original formulation. The efficiency drop at 128 GPUs is primarily due to increased local correction iterations as the global problem size grows, while communication remains a small fraction of the total runtime.
The overall scalability improvement primarily comes from eliminating explicit integral path computation. 
First, for extremum graphs, EXaCTz enforces consistent min/max neighbors for each vertex, similar to pMSz~\cite{li2026pmsz}, avoiding integral path tracing. 
Second, for contour trees, EXaCTz further removes the need for integral path computation for saddle-extrema connectivity by enforcing global ordering among critical points, reformulating the event constraint. 
\begin{figure}[htb]
    \centering
    \includegraphics[width=\linewidth]{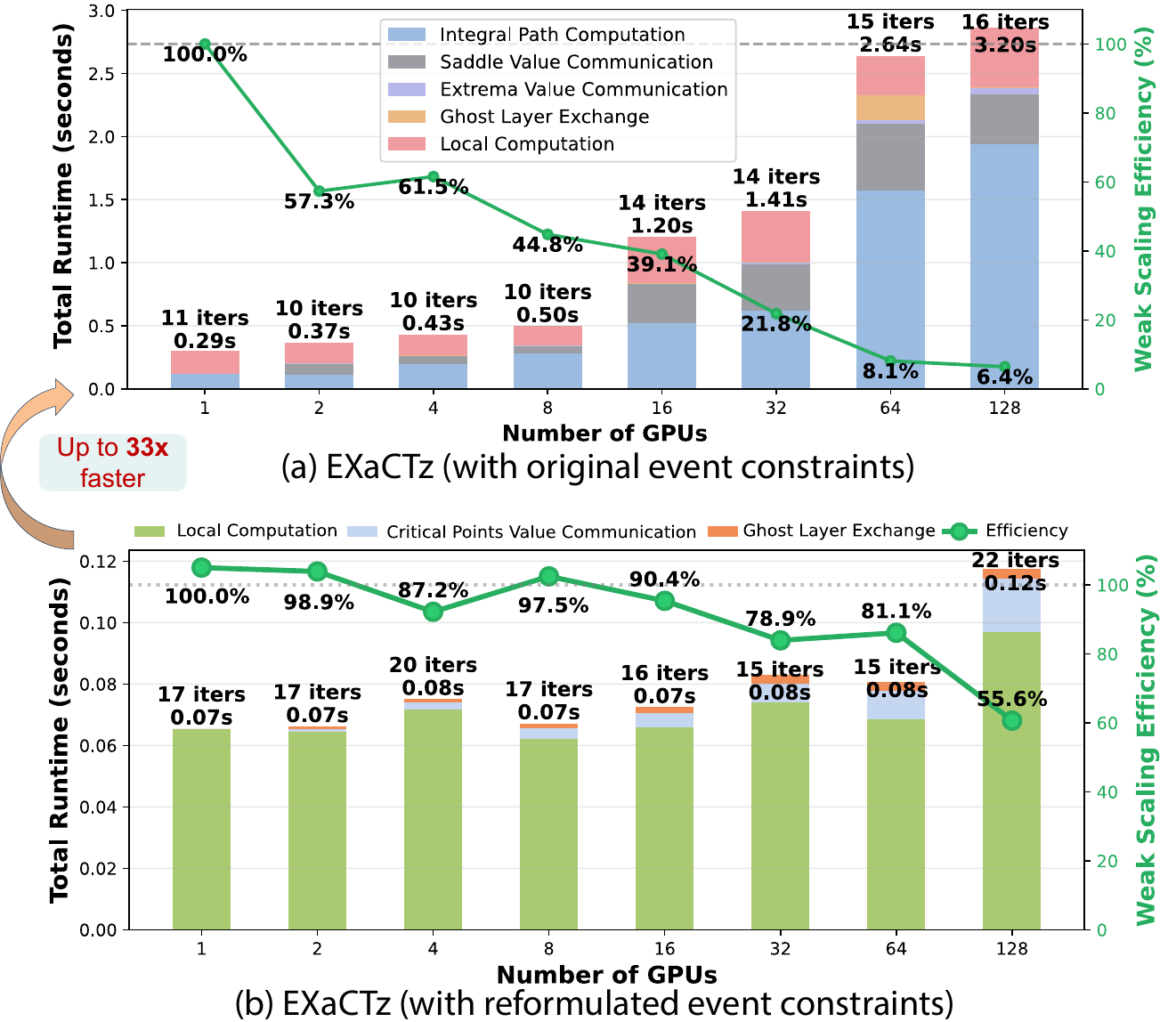}
    \caption{Weak Scaling Performance of EXaCTz from 1 to 128 GPUs with (a) original event constraints. (b) reformulated event constraints.}
    \label{fig:weak}
\end{figure}

\begin{table}[htb]
\small
\centering
\caption{Multi-GPU performance evaluation of EXaCTz. \#GPUs denotes the minimum number of GPUs required to fit each dataset.}
\label{tab:multi_node}
\renewcommand{\arraystretch}{0.7} 
\setlength{\tabcolsep}{1pt} 
\begin{tabular}{lccccc|cccc}
\toprule
\multirow{2}{*}{Dataset} & \multirow{2}{*}{\shortstack{Size\\(GB)}} & \multirow{2}{*}{\# GPUs} & \multicolumn{3}{c|}{\shortstack{Stage 1 \\ Compression-only}} & \multicolumn{4}{c}{\shortstack{Stage 2 \\ EXaCTz Correction-only}} \\
\cmidrule(lr){4-6} \cmidrule(lr){7-10}
& & & Type & Time (s) & CR & Time (s) & Edit \% & Iter. & OCR \\
\midrule
\multirow{5}{*}{\begin{tabular}{c}NYX\\ ($1024^3$)\end{tabular}} & \multirow{5}{*}{8} & \multirow{5}{*}{2} 
& SZ3 & 13.11 & 12.25 & 3.18 & 3.47 & 71 & 10.29 \\
& & & ZFP & 13.07 & 5.97 & 10.53 & 0.82 & 149 & 5.37 \\
& & & cuSZp &  0.03 & 4.85 & 3.54 & 4.71 & 87 & 4.52 \\
& & & cuZFP & 0.18 & 9.61 & 10.67 & 36.5 & 166 & 6.71 \\
& & & MGARD-GPU & 7.08 & 8.11 & 9.10 & 3.41 & 251 & 7.31 \\

\midrule
\multirow{5}{*}{\begin{tabular}{c}NYX\\($2048^3$)\end{tabular}} & \multirow{5}{*}{64} & \multirow{5}{*}{16} 
& SZ3 & 16.14 & 12.59 & 11.48 & 3.81 & 101 & 10.34 \\
& & & ZFP & 11.07 & 6.10 & 24.07 & 0.82 & 412 & 5.85 \\
& & & cuSZp & 0.01 & 4.77 & 9.98 & 5.27 & 96 &4.46 \\
& & & cuZFP & 0.11 & 9.20 & 15.23 & 13.63 & 194 & 5.56\\
& & & MGARD-GPU & 7.06 & 8.15 & 23.05 & 3.72 & 335 & 7.37 \\

\midrule
\multirow{5}{*}{\begin{tabular}{c}Turbulence\\ ($4096^3$)\end{tabular}} & \multirow{5}{*}{512} & \multirow{5}{*}{128} 
& SZ3 & 16.50 & 33.72 & 15.66 & 3.12 & 241 & 25.21 \\
& & & ZFP & 12.74 & 10.40 & 35.78 & 0.51 & 441 & 10.11 \\
& & & cuSZp & 0.01   & 6.12 & 18.22 & 5.91 & 111  & 5.72 \\
& & & cuZFP & 0.23 & 9.99 & 47.56 &  4.26 &  612 & 5.10 \\
& & & MGARD-GPU & 8.76 & 10.24 & 40.74 & 3.86 & 159 & 9.37 \\

\bottomrule
\end{tabular}
\vspace{-0.1in}
\end{table}

Table~\ref{tab:multi_node} further demonstrates EXaCTz's efficiency on large datasets. We process up to 512 GB data (Turbulence $4096^3$) using 128 GPUs, where the correction takes only 15.66 seconds for SZ3, 18.22 seconds for cuSZp, and remains under 48 seconds even in the slowest case (47.56 seconds for cuZFP), while the OCR remains practical (e.g., achieving an OCR of 25.21 with SZ3).

\subsection{Extremum Graph and Contour Tree Preservation Evaluation}
We evaluate the effectiveness of EXaCTz in preserving extremum graphs and contour trees under error-bounded lossy compression using scientific datasets that fit within a single GPU's memory. 

\begin{mdframed}
\noindent\textbf{Observation 5.}
Our approach achieves \textbf{100\%} preservation of extremum graph and contour tree across all tested datasets.
\end{mdframed}

\begin{table}[htb]
\centering
\small
\caption{Topology preservation under a relative error bound of $10^{-4}$. SZ3 reports results before correction, while TopoA, pMSz, and EXaCTz report results after correction. All metrics are reported in terms of recall. N/A indicates results that failed to complete within a one-hour time limit.}
\label{tab:metrics}
\setlength{\tabcolsep}{2pt} 
\begin{tabular}{l ccc ccc ccc ccc}
\toprule
\multirow{2.5}{*}{Dataset} & \multicolumn{3}{c}{Before Correction} & \multicolumn{9}{c}{After Correction} \\
\cmidrule(lr){2-4} \cmidrule(lr){5-13}
 & \multicolumn{3}{c}{SZ3} & \multicolumn{3}{c}{TopoA} & \multicolumn{3}{c}{pMSz} & \multicolumn{3}{c}{EXaCTz} \\
\cmidrule(lr){2-4} \cmidrule(lr){5-7} \cmidrule(lr){8-10} \cmidrule(lr){11-13}
 & CP & EG & CT & CP & EG & CT & CP & EG & CT & CP & EG & CT \\
\midrule
QMCPack       & 0.66 & 0.52 & 0.50 & 1 & 0.90 & 1 & 0.99 & 0.72 & 0.61 & 1 & 1 & 1 \\
AT            & 0.88 & 0.87 & 0.79 & 1 & 0.96 & 1 & 0.92 & 0.99 & 0.81 & 1 & 1 & 1 \\
Vortex        & 0.97 & 0.88 & 0.78 & 1 & 0.97 & 1 & 0.99 & 0.99 & 0.82 & 1 & 1 & 1 \\
Turbulence    & 0.90 & 0.87 & 0.76 & N/A  & N/A  & N/A  & 0.99 & 0.99 & 0.77 & 1 & 1 & 1 \\
NYX ($512^3$) & 0.86 & 0.77 & 0.52 & N/A  & N/A  & N/A  & 0.98 & 0.92 & 0.62 & 1 & 1 & 1 \\
Combustion    & 0.77 & 0.32 & 0.20 & 1 & 0.96 & 1 & 0.99 & 0.95 & 1.00 & 1 & 1 & 1 \\
\bottomrule
\end{tabular}
\vspace{-0.1in}
\end{table}

\noindent\textbf{Quantitative Comparison with Baseline Methods.}
We use SZ3 as the base compressor and compare EXaCTz against two state-of-the-art topology-preserving methods: TopoA and pMSz~\cite{li2026pmsz} (a method that preserves Morse-Smale segmentation).  As shown in Table~\ref{tab:metrics}, EXaCTz consistently achieves 1.00 recall across all metrics and datasets, whereas existing methods only achieve partial preservation. While TopoA guarantees contour tree preservation (1.00 CT recall), it yields errors in the extremum graph (e.g., 0.96 EG recall on AT and Combustion). pMSz primarily preserves extrema and their connectivity but neglects saddles, leaving its CP, EG, and CT recall scores below 1.00. Without post-processing, standard SZ3 exhibits substantial topology loss across all datasets.
\textbf{Qualitative Visual Comparison with SZ3 and TopoA.}
We extract the extremum graph and contour tree of a 2D NYX slice ($512^2$) as shown in Figure~\ref{fig:EG&CT}. Consistent with Table~\ref{tab:metrics}, EXaCTz corrects all SZ3-induced distortions (Figure~\ref{fig:EG&CT}(b)) across both the extremum graph and the contour tree (Figure~\ref{fig:EG&CT}(d)), whereas TopoA still leaves residual distortions in the extremum graph (Figure~\ref{fig:EG&CT}(c)). 
\begin{figure}[htb]
    \centering
    \includegraphics[width=\linewidth]{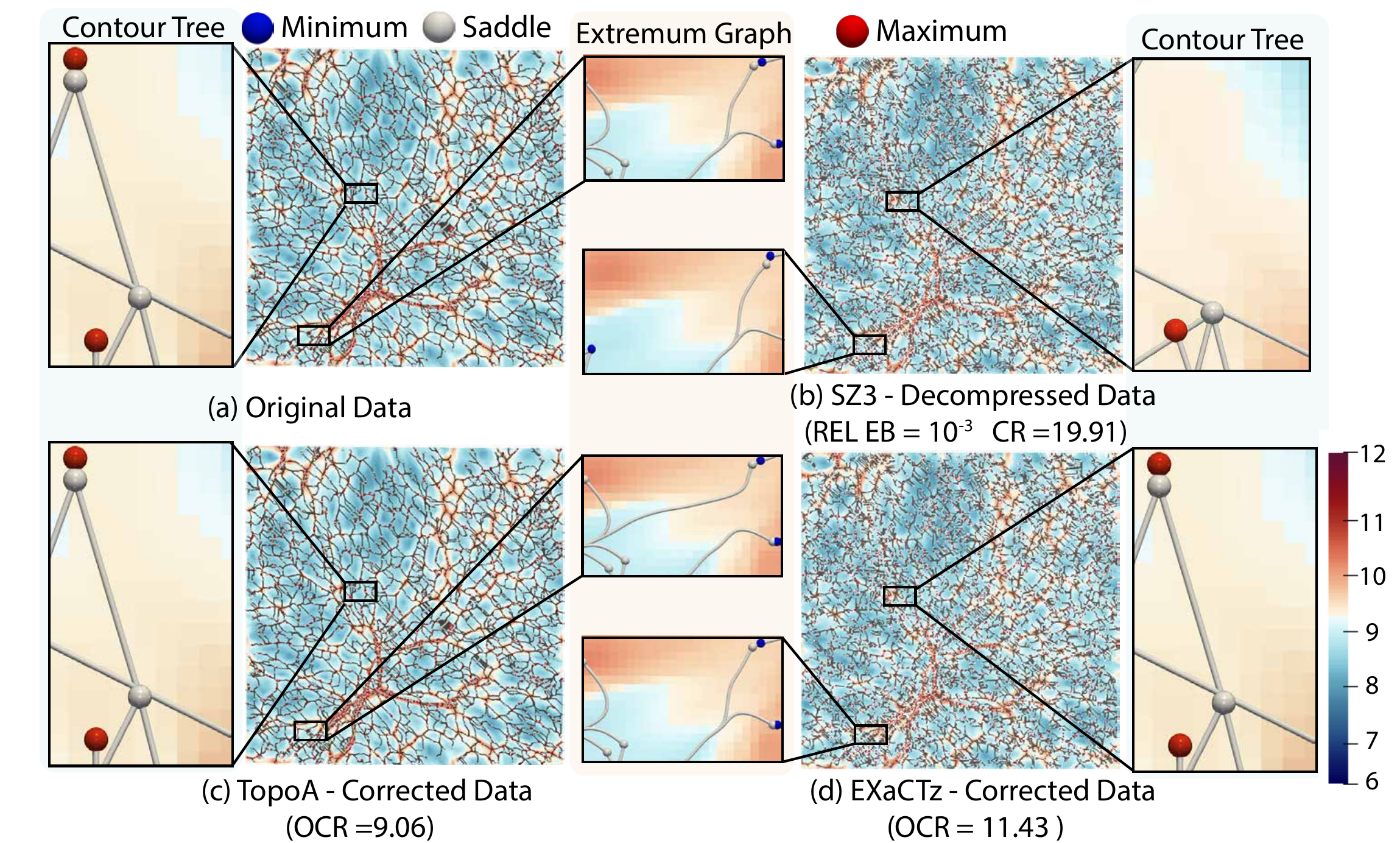}
    \caption{Qualitative comparison of topological structures under lossy compression. (a) Original data. (b) SZ3-decompressed data. SZ3-decompressed data corrected by (c) TopoA and (d) EXaCTz.}
    \label{fig:EG&CT}
    \vspace{-0.2in}
\end{figure}

\section{CONCLUSION AND FUTURE WORK}

In this paper, we address three key limitations of existing topology-preserving compressors: the high-throughput performance gap, limited support for diverse topological descriptors, and the lack of explicit theoretical guarantees.
We propose EXaCTz, a distributed- and GPU-parallel algorithm that concurrently preserves extremum graphs and full contour trees under error-bounded lossy compression. By eliminating explicit topology construction and enforcing localized scalar ordering constraints, EXaCTz achieves up to $213\times$ speedup over state-of-the-art contour-tree-preserving methods (e.g., TopoA~\cite{gorski2025general}) in serial CPU settings and GB/s-scale throughput on GPUs. 
To support both extremum graphs and contour trees, we leverage their theoretical connection to enforce a set of constraints. Additionally, we formally analyze the iterative correction process and derive an explicit iteration upper bound, ensuring predictable convergence.
By reformulating event constraints for distributed-memory environments, we improve weak-scaling efficiency from 6.4\% to 55.6\% at 128 GPUs.

\noindent\textbf{Limitations.}
Our work has several limitations. First, maintaining the necessary data fields and topological constraints introduces an approximately $8\times$ memory overhead, posing a practical challenge for memory-bound \textit{in-situ} applications, reducing this footprint is a key future direction. Second, the theoretical upper bound derived for maximum correction iterations is relatively loose, assuming all flip pairs eventually trigger edits. Establishing a tighter bound aligning with our empirical observations of rapid convergence remains important for future research. Third, while reformulated event constraints avoid expensive integral path tracing, global exchanges of critical point scalar values are still required to detect ordering violations. Localized or hierarchical synchronization strategies could further optimize communication frequency and volume.

\bibliographystyle{abbrv-doi-hyperref-narrow}
\bibliography{refs-EXaCTz}

\clearpage
\appendix
\noindent This appendix provides supplementary material to support the main findings of this paper.
Section A details the evaluation datasets used throughout our experiments. Section B presents additional performance evaluation on TTHRESH and SPERR. Section C provide empirical analyses on parameter N. Section D provide analysis of topological distortion by different base compressors.
\section{DETAILS OF THE DATASETS}\label{datasets}
This section outlines the datasets used in our evaluation.

\noindent The \textbf{QMCPACK} dataset is from performance tests of a continuum quantum Monte Carlo simulation~\cite{QMCPack} (accessed via the SDRBench~\cite{SDRBench}).

\noindent The \textbf{Adenine Thymine} is a molecular simulation dataset from the Topology ToolKit (TTK) Tutorial Data~\cite{ttk}, models the electron density of a DNA base pair. It restricts the scalar field to a 2D planar slice within a 3D domain, reflecting common practices in quantum chemistry to reduce complexity while retaining essential structural information.

\noindent The \textbf{Turbulent Vortex} dataset is a standard benchmark widely used in topological data analysis, and originates from a fluid turbulence simulation and features intricate, interlocking vortex structures.

\noindent The \textbf{NYX} dataset~\cite{Nyx} consists of 3D spatial fields generated by the cosmological hydrodynamics code NYX~\cite{Almgren_2013}. Each output is a three-dimensional spatial field used for post-analysis and includes variables such as dark matter density and temperature. For our evaluation, we use this dataset at three different dimensions: $512^3$, $1024^3$, and $2048^3$.

\noindent The \textbf{Turbulent Combustion Simulation} dataset captures a turbulent non-premixed flame with detailed chemistry, generated via high-fidelity direct numerical simulation (DNS). For our evaluation, we evaluate this dataset at three dimensions: $256^3$, $560^3$, and $4096^3$.

\section{Additional Performance Evaluation on TTHRESH and SPERR}

We extend our performance evaluation by using TTHRESH~\cite{TTHRESH} and SPERR~\cite{SPERR} as base compressors across single- and multi-GPU environments, as shown in Table~\ref{tab:tthresh}. In the multi-GPU setting, the base compressors operate independently on each local data block.

Across all datasets, EXaCTz consistently completes correction within the same order of magnitude as the base compression time, and often substantially faster, and the edit ratio remains consistently low (typically below 8\%). For example, on the $2048^3$ NYX dataset, EXaCTz requires 18.43s for correction compared to 437s for the initial SPERR compression, while modifying only 2.86\% of the data. For smaller datasets, the correction time is negligible (e.g., 0.03–0.27s for QMCPack and AT), while for larger datasets, the correction cost scales moderately with data size.

Comparing compressors, correcting TTHRESH generally requires more iterations and longer runtime than SPERR. This is because TTHRESH’s global tensor decomposition introduces more spatially distributed perturbations, leading to widespread violations that require additional correction iterations.

\begin{table}
\centering
\small
\caption{Performance evaluation of EXaCTz on NVIDIA A100 GPUs using SPERR and TTHRESH as base compressors ($\xi=10^{-4}$). SPERR and TTHRESH are CPU-based compressors, and the reported EXaCTz time refers only to the correction stage. \#GPUs denotes the minimum number of GPUs required to fit each dataset.}
\renewcommand{\arraystretch}{0.9}
\label{tab:tthresh}
\setlength{\tabcolsep}{0.8pt}
\begin{tabular}{lccccc|cccc}
\toprule
\multirow{2}{*}{Dataset} 
& \multirow{2}{*}{\shortstack{Size\\(GB)}} 
& \multirow{2}{*}{\#GPUs} 
& \multicolumn{3}{c|}{\shortstack{Stage 1\\Compression-only}} 
& \multicolumn{4}{c}{\shortstack{Stage 2\\EXaCTz Correction-only}} \\
\cmidrule(lr){4-6} \cmidrule(lr){7-10}
& & & Type & Time (s) & CR & Time (s) & Edit \% & Iter. & OCR \\
\midrule

\multirow{2}{*}{QMCPack} & \multirow{2}{*}{0.004} & \multirow{2}{*}{1}
& SPERR   & 0.30   & 89.16  & 0.05  & 0.84 & 51  & 51.06 \\
& & & TTHRESH & 0.22   & 88.04  & 0.03  & 0.77 & 34  & 71.11 \\
\midrule

\multirow{2}{*}{AT} & \multirow{2}{*}{0.006} & \multirow{2}{*}{1}
& SPERR   & 0.54   & 73.78  & 0.01  & 0.27 & 13  & 69.21 \\
& & & TTHRESH & 0.27   & 153.61 & 0.04  & 0.01 & 56  & 65.47 \\
\midrule

\multirow{2}{*}{Vortex} & \multirow{2}{*}{0.015} & \multirow{2}{*}{1}
& SPERR   & 1.12   & 20.41  & 1.30  & 1.30 & 52  & 17.84 \\
& & & TTHRESH & 0.94   & 7.36   & 0.06  & 0.99 & 133 & 7.02 \\
\midrule

\multirow{2}{*}{Turbulence} & \multirow{2}{*}{0.125} & \multirow{2}{*}{1}
& SPERR   & 10.21  & 9.22   & 0.18  & 4.87 & 36  & 6.99 \\
& & & TTHRESH & 6.53   & 4.85   & 0.28  & 4.22 & 159 & 4.19 \\
\midrule

\multirow{2}{*}{NYX ($512^3$)} & \multirow{2}{*}{1.000} & \multirow{2}{*}{1}
& SPERR   & 11.34  & 11.92  & 0.53  & 2.75 & 56  & 10.20 \\
& & & TTHRESH & 47.02  & 6.98   & 0.72  & 1.63 & 117 & 6.41 \\
\midrule

\multirow{2}{*}{Combustion} & \multirow{2}{*}{1.310} & \multirow{2}{*}{1}
& SPERR   & 71.19  & 145.11 & 0.45  & 0.87 & 79  & 104.63 \\
& & & TTHRESH & 46.87  & 64.19  & 1.36  & 0.24 & 287 & 60.37 \\
\midrule

\multirow{2}{*}{NYX ($1024^3$)} & \multirow{2}{*}{8.00} & \multirow{2}{*}{2}
& SPERR   & 441    & 12.01  & 5.46  & 2.65 & 152 & 10.42 \\
& & & TTHRESH & 137.64 & 8.11   & 8.74  & 2.89 & 453 & 7.34 \\
\midrule

\multirow{2}{*}{NYX ($2048^3$)} & \multirow{2}{*}{64.00} & \multirow{2}{*}{16}
& SPERR   & 437    & 12.45  & 18.43 & 2.86 & 185 & 10.81 \\
& & & TTHRESH & 137.35 & 12.09  & 32.36 & 7.71 & 372 & 9.79 \\
\midrule

\multirow{2}{*}{Turbulence ($4096^3$)} & \multirow{2}{*}{512.00} & \multirow{2}{*}{128}
& SPERR   & 447    & 48.44  & 15.66 & 1.76 & 447 & 34.74 \\
& & & TTHRESH & 133.26     & 26.58     & 46.72   & 1.69   & 568  & 16.44 \\
\bottomrule
\end{tabular}
\end{table}

\section{Empirical Evaluation on Parameter N}

\begin{figure}
    \centering
    \includegraphics[width=\linewidth]{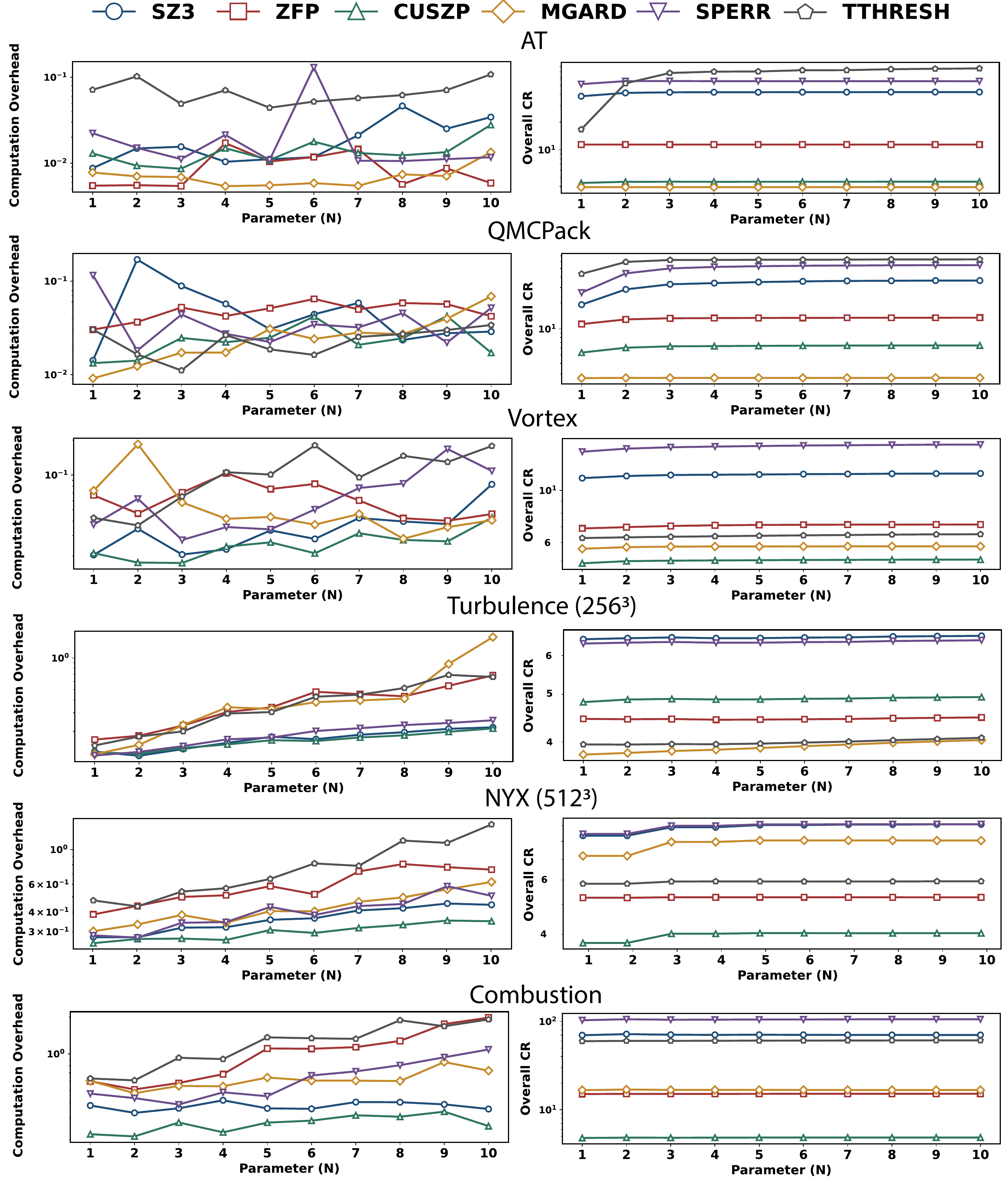}
    \caption{Trade-off between compression ratio (CR) and computation time with respect to parameter $N$ across different datasets and compressors.}
    \label{fig:Parameter}
\end{figure}

We evaluate the impact of the parameter $N$, which limits the maximum number of edits allowed per vertex, on both computational overhead and overall compression ratio (OCR), as shown in Figure~\ref{fig:Parameter}.

Across all datasets and compressors, $N$ introduces a trade-off between correction cost and overall compression ratio. When $N$ is small, the algorithm restricts the number of edits per vertex, resulting in lower computational overhead but reduced CR. Conversely, increasing $N$ allows for a higher number of finer edits to be applied to each vertex within each iteration. While this provides improved OCR, it also leads to a higher overall correction time.

A consistent trend emerges despite variations across datasets and compressors : the improvement in CR gradually saturates as $N$ increases. In most cases, OCR shows only marginal gains when $N > 5$, whereas the computational overhead continues to increase. 
While the optimal value of $N$ may vary slightly across datasets and compressors, $N=5$ performs well across all evaluated cases. Based on this observation, we set $N=5$ in our experiments as a practical balance between correction cost and compression ratio.

\begin{table*}
\centering
\small
\caption{Topology preservation comparison under relative error bound $10^{-4}$. ZFP, TTHRESH, SPERR MGARD, and cuSZp represent the baselines before correction}
\label{tab:metrics1}
\renewcommand{\arraystretch}{1.4}
 
\begin{tabular}{l ccc ccc ccc ccc ccc @{\quad} ccc}
\toprule
\multirow{2.5}{*}{Dataset} & \multicolumn{15}{c}{Before Correction} & \multicolumn{3}{c}{After Correction} \\
\cmidrule(lr){2-16} \cmidrule(lr){17-19}

 & \multicolumn{3}{c}{ZFP} & \multicolumn{3}{c}{TTHRESH} & \multicolumn{3}{c}{SPERR} & \multicolumn{3}{c}{cuSZp} & \multicolumn{3}{c}{MGARD} & \multicolumn{3}{c}{EXaCTz} \\
\cmidrule(lr){2-4} \cmidrule(lr){5-7} \cmidrule(lr){8-10} \cmidrule(lr){11-13} \cmidrule(lr){14-16} \cmidrule(lr){17-19}
 & CP & EG & CT
 & CP & EG & CT 
 & CP & EG & CT 
 & CP & EG & CT 
& CP & EG & CT 
& CP & EG & CT\\
\cmidrule(lr){2-4} \cmidrule(lr){5-7} \cmidrule(lr){8-10} \cmidrule(lr){11-13} \cmidrule(lr){14-16} \cmidrule(lr){17-19}
QMCPack     & 0.93 & 0.84 & 0.84
            & 0.91 & 0.86 & 0.76 
            & 0.80 & 0.72 & 0.64 
            & 0.27 & 0.48 & 0.40 
            & 0.91 &0.80 & 0.82 
            & 1.00 & 1.00 & 1.00
            \\
AT          & 0.98 & 0.94 & 0.91
            & 0.75& 0.68& 0.61 
            & 0.89 & 0.88& 0.84 
            &  0.68 & 0.64 & 0.50 
            & 0.90& 0.94& 0.92 
            & 1.00 & 1.00 & 1.00
            \\
Vortex      & 0.99 & 0.99 & 0.95
            & 0.99 & 0.88 & 0.93 
            & 0.98& 0.94& 0.82 
            & 0.95 & 0.89 & 0.74 
            & 0.99 & 0.97 & 0.89 
            & 1.00 & 1.00 & 1.00
            \\
Turbulence  & 0.91 & 0.91 & 0.83
            & 0.99 & 0.99 & 0.83 
            & 0.98 & 0.97 &0.77 
            & 0.97 & 0.95 & 0.73 
            & 0.99 & 0.99 & 0.82 
            & 1.00 & 1.00 & 1.00
            \\
NYX         & 0.99 & 0.98 & 0.77 
            & 0.97 & 0.96 & 0.72
            & 0.90& 0.82& 0.57
            & 0.80 & 0.69 & 0.44 
            & 0.85 & 0.75 & 0.51 
            & 1.00 & 1.00 & 1.00
            \\
Combustion  & 0.77 & 0.32 & 0.20
            & 0.96&  0.84& 0.61 
            & 0.87& 0.54 & 0.35
            & 0.52 & 0.27 & 0.25 
            & 0.82 & 0.37 & 0.29 
            & 1.00 & 1.00 & 1.00
            \\

\hline
\end{tabular}
\end{table*}
\section{Analysis of Topological Distortion by Base Compressors.}
To evaluate how different base compressors affect extremum graphs and contour trees, we tested them under the same relative error bound of $10^{-4}$, as shwon in Table~\ref{tab:metrics1}. The evaluated compressors include ZFP (transform-based), TTHRESH (tensor-based), SPERR (block-based), MGARD, and cuSZp (cuZFP is excluded because it does not support fixed error bounds). In contrast, after the iterative correction by EXaCTz, both EG and CT are perfectly preserved across all datasets and base compressors, achieving a recall of 1.00

Across all compressors, CP recall remains relatively high compared to EG and CT, indicating that critical point detection is more robust under error-bounded compression. In contrast, EG and CT exhibit significantly larger distortion, especially on complex datasets such as Combustion and NYX. For example, on the Combustion dataset, CT recall drops to as low as 0.25 for cuSZp and 0.29 for MGARD. Similar trends are observed on NYX, where EG and CT recalls decrease sharply for most compressors.

Different compressors also exhibit different distortion patterns. ZFP and MGARD tend to better preserve EG structures in some cases (e.g., ZFP achieves 0.99 EG recall on the Vortex dataset), but still distort CT due to global structural changes. SZ3/cuSZp often preserve CP reasonably well but can severely distort EG and CT, particularly on datasets with complex structures. Notably, cuSZp shows strong degradation in CT (e.g., 0.25 on Combustion), indicating that high-throughput GPU compressors may introduce nontrivial topological inconsistencies. SPERR and TTHRESH exhibit intermediate behavior, with moderate degradation across EG and CT.

\end{document}